\numberwithin{equation}{section}
\newtheorem{theorem}{Theorem}[section]
\newtheorem{lemma}[theorem]{Lemma}
\newtheorem{corollary}[theorem]{Corollary}
\newtheorem{proposition}[theorem]{Proposition}
\theoremstyle{definition}
\theoremstyle{remark}
\newtheorem{remark}[theorem]{Remark}
\numberwithin{equation}{section}
\newcommand{\red}{\color{red}}
\numberwithin{equation}{section}
\newcommand{\bdm}{\begin{displaymath}}
\newcommand{\edm}{\end{displaymath}}
\newcommand{\bdn}{\begin{eqnarray}}
\newcommand{\edn}{\end{eqnarray}}
\newcommand{\bay}{\begin{array}{c}}
\newcommand{\eay}{\end{array}}
\newcommand{\ben}{\begin{enumerate}}
\newcommand{\een}{\end{enumerate}}
\newcommand{\beq}{\begin{equation}}
\newcommand{\eeq}{\end{equation}}
\newcommand{\tx}{\textstyle}
\newcommand{\eps}{\varepsilon}
\newcommand{\chiin}{\chi_{\mathrm{in}}}
\newcommand{\chiout}{\chi_{\mathrm{out}}}
\newcommand{\R}{\mathbb{R}}
\newcommand{\N}{\mathbb{N}}
\newcommand{\C}{\mathbb{C}}
\newcommand{\F}{\mathcal{F}}
\newcommand{\E}{\mathcal{E}}
\newcommand{\A}{\mathcal{A}}
\newcommand{\B}{\mathcal{B}}
\newcommand{\I}{\mathcal{I}}
\newcommand{\V}{\mathcal{V}}
\newcommand{\PP}{\mathcal{P}}
\newcommand{\xbf}{\mathbf{x}}
\newcommand{\one}{{\ensuremath {\mathds 1} }}
\newcommand{\al}{\alpha}
\newcommand{\ep}{\varepsilon}
\newcommand{\Om}{\Omega}
\newcommand{\om}{\omega}
\newcommand{\half}{\frac{1}{2}}
\newcommand{\supp}{\mathrm{supp}}
\newcommand{\wto}{\rightharpoonup}
\newcommand{\intR}{\int_{\R ^2}}
\newcommand{\intRN}{\int_{\R ^{2N}}}
\newcommand{\rhoP}{\rho_{\Psi}}
\newcommand{\rhoPF}{\rho_{\Psi_F}}
\newcommand{\LLL}{\mathfrak{H}}
\newcommand{\Barg}{\B}
\newcommand{\BargN}{\B ^{N}}
\newcommand{\cLau}{c _{\rm Lau}}
\newcommand{\intN}{\I_N}
\newcommand{\Ker}{\mathrm{Ker} (\intN)}
\newcommand{\AN}{\A_{N}}
\newcommand{\rhoMFel}{\varrho ^{\rm el}}
\newcommand{\Rminus}{R_m ^-}
\newcommand{\Rplus}{R_m ^+}
\newcommand{\PsiLau}{\Psi_{\rm Lau}}
\newcommand{\PPs}{\mathcal{P}_{\rm sym}}
\newcommand{\rhoF}{\rho_F}
\newcommand{\muF}{\mu_F}
\newcommand{\muFk}{\mu_F ^{(k)}}
\newcommand{\muFone}{\mu_F ^{(1)}}
\newcommand{\ZF}{\mathcal{Z}_F}
\newcommand{\MFf}{\E ^{\rm MF}}
\newcommand{\MFe}{E ^{\rm MF}}
\newcommand{\rhoMF}{\varrho ^{\rm MF}}
\newcommand{\logal}{\log_{\al}}
\newcommand{\Wal}{W_{\al}}
\newcommand{\mut}{\tilde{\mu}}
\newcommand{\Hepal}{H_{\eps,\alpha}}
\newcommand{\Fepal}{\F_{\eps,\alpha}}
\newcommand{\muepal}{\mu_{\ep,\al}}
\newcommand{\Zepal}{\mathcal{Z}_{\ep,\al}}
\newcommand{\MFfepal}{\MFf_{\eps,\alpha}}
\newcommand{\MFeepal}{\MFe_{\eps,\alpha}}
\newcommand{\rhoMFepal}{\rhoMF _{\eps,\alpha}}
\newcommand{\mb}{\underline{m}}
\newcommand{\error}{\mathrm{Err}}
\newcommand{\VD}{\mathcal{V}_2 ^D}
\newcommand{\ED}{E_2 ^D (N)}
\numberwithin{equation}{section}
\title{Incompressibility estimates for the Laughlin phase}
\author[Nicolas Rougerie]{Nicolas Rougerie}
\address{Universit\'e Grenoble 1 \& CNRS, LPMMC, UMR 5493, BP 166, 38042 Grenoble, France.}
\author[Jakob Yngvason]{Jakob Yngvason}
\address{Fakult\"at f\"ur Physik, Universit{\"a}t Wien, Boltzmanngasse 5, 1090 Vienna, Austria \\
Erwin Schr{\"o}dinger Institute for Mathematical Physics, Boltzmanngasse 9, 1090 Vienna, Austria.}
\date{November 20, 2014}
\begin{document}

\begin{abstract}
This paper has its motivation in the study of the Fractional Quantum Hall Effect. We consider 2D quantum particles submitted to a strong perpendicular magnetic field, reducing admissible wave functions to those of the Lowest Landau Level. When repulsive interactions are strong enough in this model, highly correlated states emerge, built on Laughlin's famous wave function.  We investigate a model for the response of such strongly correlated ground states to variations of an external potential. This leads to a family of variational problems of a new type. Our main results are rigorous energy estimates demonstrating a strong rigidity of the response of strongly correlated states to the external potential. In particular we obtain estimates indicating that there is a universal bound on the maximum local density of these states in the limit of large particle number.  We refer to these as incompressibility estimates. 
\end{abstract}

\maketitle

\tableofcontents

\section{Introduction}

The introduction by Laughlin of his famous wave-function~\cite{Lau,Lau2} forms the starting point of our current theoretical understanding of the fractional quantum Hall effect (FQHE)~\cite{BF,Goe,Gir,STG}, one the most intriguing phenomena of condensed matter physics. The situation of interest is that of interacting particles confined in two space dimensions, submitted to a perpendicular magnetic field of constant strength. The Hamiltonian of the full system is given by
\begin{equation}\label{eq:first princ hamil}
H_N = \sum_{j=1} ^N \left( - \left( \nabla_j -  \mathrm i\mathbf A(\mathbf x_j)
\right) ^2 + V (\xbf_j) \right) + \lambda \sum_{1\leq i<j \leq N} w (\xbf_i-\xbf_j) 
\end{equation}
where $\mathbf A$ is the vector potential of the applied magnetic field of strength $B$, given by
$$\mathbf{A}(\mathbf {x})=  \tx\frac{B}{2} (-y,x)$$
in the symmetric gauge. The pair interaction potential is denoted $w$, the coupling constant~$\lambda$. Here we think of repulsive interactions, $w\geq 0, \lambda \geq 0$.  Units are chosen so that  Planck's constant $\hbar$, the velocity of light and the charge are equal to 1 and the mass is equal to 1/2. The potential $V$ can model both trapping of the particles and disorder in the sample. We will study systems made of fermions or bosons, so $H_N$ will act either on $\bigotimes_{\rm as}  ^N L ^2 (\R ^2)$ or $\bigotimes_{\rm s}  ^N L ^2 (\R ^2)$, the anti-symmetric or symmetric $N$-fold tensor product of~$L ^2 (\R ^2)$.  

The quantum Hall effect (integer or fractional) occurs in the regime of very large applied magnetic fields, $B\gg 1$. In this case one can restrict attention to single particle states corresponding to the lowest eigenvalue of $ - \left( \nabla - i \mathbf A(\mathbf x)\right) ^2$, i.e., the lowest Landau level~(LLL)
\begin{equation}\label{eq:intro LLL}
\LLL := \left\{ \psi\in L ^2 (\R ^2):\ \psi (\xbf) = f(z) e ^{ - B |z| ^2 / 4},\: f \mbox{ holomorphic } \right\} 
\end{equation}
as the single-particle Hilbert space. Here and in the sequel we identify vectors $\xbf\in \R ^2$ and complex numbers $z =x+iy \in \C$. Henceforth we choose  units so that the magnetic field is $2$, in order to comply with the conventions for rotating Bose gases as in~\cite{RSY1, RSY2} where $B$ corresponds to $2$ times the angular velocity.

While the integer quantum Hall effect (IQHE) can be understood in terms of single-particle physics and the Pauli principle\footnote{The IQHE hence requires the particles to be fermions.} only, the FQHE has its origin in the pair interactions between particles\footnote{The FQHE can thus in principle also occur in bosonic systems.}. The situation that is best understood is that where the interactions are repulsive enough to force the many-body  wave-function to vanish whenever particles come close together. This leads to the following trial states~\cite{Lau}
\begin{equation}\label{eq:intro Laughlin}
\PsiLau (z_1,\ldots,z_N) = \cLau \prod_{1\leq i<j \leq N} (z_i-z_j) ^{\ell}  e ^{- \sum_{j=1} ^N |z_j| ^2 / 2} 
\end{equation}
where $\ell$ is an odd number for fermions (the case originally considered by Laughlin is $\ell = 3$) and an even number for bosons ($\ell = 2$ is the most relevant), as imposed by the symmetry. The constant $\cLau$ normalizes the state in $L^2(\R ^{2N})$. Much of FQHE physics is based on the strong inter-particle correlations included in Laughlin's wave-function.

More generally, one may restrict the wave function to live in the space 
\begin{equation}\label{eq:Ker}
\Ker = \left\{  \Psi\in L^2(\R ^{2N}):\  \Psi (z_1,\ldots,z_N) = \PsiLau (z_1,\ldots,z_N) F (z_1,\ldots,z_N), \: F \in \BargN \right\} 
\end{equation}
where 
\begin{equation}\label{eq:BargN}
\BargN := \bigotimes_s  ^N \Barg =  \left\{ F \mbox{ holomorphic and symmetric } | \:  F(z_1,\ldots,z_N) e^{- \sum_{j=1} ^N |z_j| ^2 /2  } \in L ^2 (\R ^{2N})\right\} 
\end{equation}
is the $N$-body bosonic Bargmann space (symmetric here means invariant under exchange of two particles $z_i$, $z_j$), with scalar product 
\begin{equation}\label{eq:scalar BargN}
\left\langle F, G \right\rangle_{\BargN} : = \left\langle F e^{-\sum_{j=1} ^N |z_j| ^2 /2} , G e^{-\sum_{j=1} ^N |z_j| ^2 /2}  \right\rangle_{L^2 (\R ^{2N})}.
\end{equation}
The restriction of admissible states to $\Ker$ is reminiscent of the use of the Gutzwiller projector in the study of Mott insulators, see~\cite{LNW} and references therein.

The choice of the notation $\Ker$ is motivated by the fact that for $\ell = 2$ this is precisely the kernel of the interaction operator
$$\mathcal I_N = \sum_{1\leq i <j \leq N} w(\xbf_i - \xbf_j)$$ 
acting on the $N$-body bosonic LLL, with $w$ a contact potential $\delta(\xbf)$~\cite{TK,PB,RSY2}. For bosons with short range interactions, the contact interaction can be rigorously proved to emerge in a well defined limit~\cite{LS}. We remark that the wave function is an eigenfunction of angular momentum if and only if the correlation factor is a homogeneous polynomial.

The vanishing of $\PsiLau$ along the diagonals $z_i=z_j$ strongly decreases the interaction energy, and in this paper we will neglect all the eventual residual interaction, an approximation which is frequently made in the literature~\cite{Lau,Jai}. This leads us to the study of a very simple energy functional 
\begin{equation}\label{eq:start energy}
\E [\Psi] = N \intR V (z) \rhoP (z) dz  
\end{equation}
depending only on the 1-particle probability density $\rhoP$ 
\begin{equation}\label{eq:intro density}
\rhoP (z):= \int_{\R ^{2(N-1)}} |\Psi (z,z_2,\ldots,z_N)| ^2 dz_2 \ldots dz_N 
\end{equation}
of the wave-function $\Psi$. Indeed, once the magnetic kinetic energy and the interaction energy are assumed to be fixed by the form~\eqref{eq:Ker} of the wave function, only the potential energy can vary non-trivially. In this paper we will be interested in studying the ground state energy 
\begin{equation}\label{eq:intro gse}
E (N) := \inf\left\{ \E [\Psi], \: \Psi \in \Ker, \left\Vert \Psi \right\Vert_{L ^2 (\R ^2)} = 1 \right\}. 
\end{equation}
In introducing this variational problem, which seems to be of a novel type, we have two main physical motivations in mind:
\begin{itemize}
\item In usual quantum Hall bars, electrons are confined and the disorder potential is crucial for the understanding of the physics of the quantum Hall effect. In Laughlin's original picture these aspects are neglected and only the translation-invariant case $V\equiv 0$ is considered. The finite size of the system is taken into account by fixing the filling factor, which amounts to imposing a maximum degree to the polynomial (holomorphic) part of the wave-function. The disorder is neglected altogether, and it is argued that the proposed wave-function is robust enough that these simplifications do not harm the conclusions in real samples, a fact amply confirmed by experiments (see~\cite{STG} and references therein). It is this fact that we wish to study by considering the variational problem described above. In this context, the pair-interaction is given by the 3D Coulomb kernel $w(\xbf-\mathbf{y}) = |\xbf-\mathbf{y}| ^{-1}$, so the Laughlin state is not an exact ground state. We shall, however, assume that 
the interaction is negligible when we work with functions of $\Ker$.
\item It has been recognized for some time now  (\cite{Vie} and references therein) that the bosonic Laughlin state can in principle be created by fast rotation. Exploiting the analogy between the Coriolis and the Lorentz force, one may in this case identify the rotation frequency $\Omega$ around the axis perpendicular to the plane with an artificial magnetic field $B$. The potential $V$ is then 
$$V(\xbf) = V_{\rm trap} (\xbf) - \Om ^2 |\xbf| ^2$$
where $V_{\rm trap}$ is the trapping potential confining the atoms, which is corrected to take the centrifugal force into account. For cold atomic gases, the interactions are short range and can be effectively modeled by contact potentials, for which the bosonic Laughlin state is an exact, zero-energy eigenstate~\cite{PB,LS,LSY,RSY2}.  Experiments in these highly versatile systems (still elusive to this date) would allow an unprecedented direct probe of the properties of Laughlin's wave function. Recent experimental proposals~\cite{MF,RRD,Vie} involve engineering  the trapping potential, which can lead to new physics~\cite{RSY1,RSY2}. The variational problem~\eqref{eq:intro gse} is then intended as a way to study the shape imposed on the quantum Hall droplet by the trapping potential.
\end{itemize}

Motivated in particular by the two situations described above, our aim  is to investigate the \textit{incompressibility} properties of the Laughlin phase, in the form of a strong rigidity of its response to external potentials. Taking for granted the reduction to the LLL and cancellation of the interactions by the vanishing of the wave function along the diagonals of configuration space, we wish to see whether the Laughlin state, or a close variant, emerges as the natural ground state in a given potential. In particular, it is of importance to investigate the robustness of the correlations of Laughlin's wave function when the trapping potential is varied. 

In view of~\eqref{eq:Ker} and~\eqref{eq:intro gse}, we are looking for a wave-function of the form
$$
\Psi_F (z_1,\ldots,z_N)= c_F \PsiLau (z_1,\ldots,z_N) F (z_1,\ldots,z_N)
$$
where $F\in\BargN$ and $c_F$ is a normalization factor. A natural {\it guess} is that, whatever the one-body potential, the correlations stay in the same form and the ground state is well-approximated by a wave function 
\begin{equation}\label{eq:intro guess}
\Psi (z_1,\ldots,z_N) = c_{f_1} \PsiLau (z_1,\ldots,z_N) \prod_{j=1} ^N f_1 (z_j)  
\end{equation}
where the additional holomorphic factor $F$ that characterizes functions of $\Ker$ is uncorrelated, which is the meaning of the ansatz $F= f_1 ^{\otimes N}$. Note that, although this is a natural guess (in the absence of interactions it does not seem favorable to correlate the state more than necessary), it is {\it far from being trivial}. Indeed, although~\eqref{eq:start energy} is a one-body functional in terms of $\Psi$, the correlation factor $F$ really sees an effective, complicated many-body Hamiltonian because of the factor $\PsiLau$ in $\Psi_F$.

The energy functional~\eqref{eq:start energy} is of course very simple and all the difficulty of the problem lies in the intricate structure of the variational set~\eqref{eq:Ker} of fully-correlated wave-functions. The expected rigidity of the strongly correlated states of~\eqref{eq:Ker} 
should manifest itself through the property that their densities are essentially bounded above by a universal constant
\begin{equation}\label{eq:intro incomp bound}
\rhoP \lessapprox \frac{1}{\pi\ell N} \mbox{ for any } \Psi \in \Ker. 
\end{equation}
This is the incompressibility notion we will investigate, in the limit $N\to \infty$. In view of existing numerical computations of the Laughlin state (e.g.~\cite{Cif}),~\eqref{eq:intro incomp bound} can hold only in some appropriate weak sense, see below. We are not able at present to study the full variational set~\eqref{eq:Ker} and we will make assumptions on the possible form of the additional correlation factor $F$ in order to obtain a tractable model.

In fact we shall pursue along Laughlin's original intuition~\cite{Lau} and assume that particles are correlated only pairwise. This means that $F$ contains only two-body correlation factors, i.e. that it can be written in the form
\begin{equation}\label{eq:choice correlations}
F (z_1,\ldots,z_N) = \prod_{j = 1} ^N f_1 (z_j)\prod_{( i,j ) \in \{1,\ldots,N \} } f_2 (z_i,z_j)
\end{equation}
with $f_1$ and $f_2$ two polynomials ($f_2$ being in addition a symmetric function of $z,z'$) satisfying
\beq\label{eq:degree} \deg (f_1) \leq D N, \quad \deg(f_2) \leq D\eeq
for some constant $D$ independent of $N$. The degree of $f_2(z,z')$ is here understood as the degree of the polynomial in $z$ with $z'$ fixed (and vice versa).  Assumptions~\eqref{eq:choice correlations} and~\eqref{eq:degree} are, of course, restrictive, but still cover a huge class of functions with possibly very intricate correlations and leads to a problem that is not at all trivial. They can also be relaxed to some extent, see Section~\ref{sec:extensions}.

Our {\it main result} is that any normalized state of $\Ker$
\begin{equation}\label{eq:general state}
\Psi_F = c_F \, F\,  \PsiLau, \quad \Vert \Psi \Vert_{L ^2 (\R ^2)} = 1 
\end{equation}
corresponding to such correlation factors $F$ satisfies~\eqref{eq:intro incomp bound} in the limit $N\to \infty$ in a suitable weak sense made clear below\footnote{In particular we rescale the functions to take into account the fact that the right-hand side of~\eqref{eq:intro incomp bound} vanishes in the large $N$ limit.}. 

We then present some applications of theses estimates for the study of the variational problem~\eqref{eq:intro gse}, restricted to states of the above form. For a large class of radial potentials  $V(|z|)$ we confirm the optimality of the ansatz~\eqref{eq:intro guess}: adding a correlations factor $f_2$ in~\eqref{eq:choice correlations} cannot reduce the energy. If the potential is increasing, the Laughlin state is always preferred, i.e. $f_1\equiv 1$ gives the optimal energy. If the potential has a maximum at the origin, a situation we considered in~\cite{RSY1,RSY2}, it is favorable to choose $f_1 (z)= z ^{m}$ and optimize over $m$. That one needs only use this form for a large class of radial potentials is an illustration of the expected rigidity of the Laughlin phase.

\medskip

\noindent \textbf{Remarks on terminology.}
We note that the notion of incompressibility we investigate is related to, but different from, the notion that there should be a gap in the energy spectrum above the energy of the Laughlin state. The latter notion depends on the Hamiltonian under consideration and it is not clear what it means when the Laughlin state is not an exact eigenstate, as e.g. in the case of Coulomb interaction. The two notions are in turn related to, but different from, the existence of a mobility gap.

In the sequel we will refer to states of $\Ker$ as ``fully-correlated'' since they include (at least) the strongly correlated factor $\PsiLau$, and that adding more correlations does not decrease the interaction energy in contact potentials.  States of the form~\eqref{eq:intro guess} constitute the ``Laughlin phase'' in our language since they include exactly the correlations of the Laughlin state~\eqref{eq:intro Laughlin}, the remaining factor describing i.i.d particles. Our goal is thus, assuming that states are fully correlated (which can be justified rigorously in some asymptotic regimes~\cite{LS,RSY1,RSY2}), to show that one can reduce to the Laughlin phase. We will call $F$ a ``correlation factor'' since it encodes eventual correlations added to those of the Laughlin state, and the associated fully-correlated state $\Psi = c_F F \PsiLau$ will always be understood to be normalized in $L^2$ thanks to the constant $c_F$. 

\medskip

\noindent\textbf{Acknowledgments.} N.R. thanks the \textit{Erwin Schr\"odinger Institute}, Vienna, for its hospitality. Part of this research was done while the authors were visiting the \textit{Institut Henri Poincar\'e}, Paris, and another part while the authors were visiting the \textit{Institute for Mathematical Science} of the National University of Singapore. We acknowledge interesting discussions with Michele Correggi and financial support from the ANR (project Mathostaq, ANR-13-JS01-0005-01) and the Austrian Science Fund (FWF) under project P~22929-N16.

\section{Main results}\label{sec:main results}

From now on we consider the Laughlin state~\eqref{eq:intro Laughlin} for a given, fixed integer $\ell$, along with the associated set~\eqref{eq:Ker}. As announced we focus on proving a particular case of incompressibility by considering special trial states. Recall the definition~\eqref{eq:BargN} of the $N$-body bosonic Bargmann space and define the set  
\begin{multline}\label{eq:var set 2 D}
\VD = \Big\{ F \in \Barg ^N \, : \, \mbox{ there exist } (f_1,f_2) \in \Barg \times \Barg ^2, \deg (f_1) \leq D N, \deg(f_2) \leq D, 
\\  F (z_1,\ldots,z_N) = \prod_{j = 1} ^N f_1 (z_j)\prod_{1\leq i < j \leq N } f_2 (z_i,z_j) \Big\}
\end{multline}
where we understand that $f_1$ and $f_2$ are polynomials, and we assume bounds on their degrees. Note that it is very natural to be less restrictive in the bound on the degree of $f_1$ than on the degree of $f_2$: taking some $f_2 (z,z') = f_1 (z) f_1 (z')$ with $\deg (f_1)\leq D/2$ we could construct a $f_1$ factor of degree $DN$ anyway.

Given $F\in \VD$ we define the corresponding fully-correlated state
\begin{equation}\label{eq:PsiF}
\Psi_F (z_1,\ldots,z_N)= c_F \PsiLau (z_1,\ldots,z_N) F (z_1,\ldots,z_N)
\end{equation}
where $c_F$ is a normalization factor ensuring $\Vert \Psi_F \Vert_{L^2 (\R ^2)} = 1$. It is a well-known fact~\cite{Lau,RSY2} that the one-particle density of the Laughlin state is approximately constant over a disc of radius $\sim \sqrt{N}$ and then quickly drops to $0$. It is thus natural to also consider external potentials that live on this scale, which amounts to scale space variables and consider the energy functional
\begin{equation}\label{eq:scale ener}
\E_N [\Psi] = (N-1) \intR V\left(\xbf\right) \rhoP \left(\sqrt{N-1} \: \xbf \right) 
\end{equation}
where $\Psi$ is of the form~\eqref{eq:PsiF} and $\rhoP$ is the corresponding matter density. Note the choice of normalization: in view of~\eqref{eq:intro density}, 
$$\intR (N-1)  \rhoP \left(\sqrt{N-1} \: \xbf \right) d\xbf = 1.$$
That we scale lengths by a factor $\sqrt{N-1}$ instead of $\sqrt{N}$ is of course irrelevant for large~$N$. It only serves to simplify some expressions in Section~\ref{sec:incomp main} below. 

We define the ground state energy in the set $\VD$ as 
\begin{equation}\label{eq:energy Nn}
\ED:= \inf \left\{ \E_N [\Psi_F], \; \Psi_F \mbox{ of the form } (\ref{eq:PsiF}) \mbox{ with } F \in \VD \right\}.
\end{equation}
We can now state our incompressibility result, in the form of a universal lower bound on~$\ED$:

\begin{theorem}[\textbf{Weak incompressibility estimate for the Laughlin phase}]\label{thm:incomp main}\mbox{}\\
Let $V\in C ^{2} (\R^2)$ be increasing at infinity in the sense that
\begin{equation}\label{eq:increase V}
\min_{|x|\geq R} V(\xbf)\to \infty\quad\text{for}\quad R\to\infty.
\end{equation}
Define the corresponding ``bath-tub energy'' 
\begin{equation}\label{eq:bath tub}
E_V (\ell):= \inf\left\{ \intR V \rho \: | \: \rho \in L^1 (\R ^2), \: 0 \leq \rho \leq \frac{1}{\pi \ell},\ \intR\rho=1 \right\}.
\end{equation}
Then 
\begin{equation}\label{eq:main incomp}
\liminf_{N\to \infty} \ED \geq E_V(\ell).  
\end{equation}
\end{theorem}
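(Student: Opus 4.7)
The plan is to reduce the energy lower bound to a density bound for any weak limit of the rescaled one-particle densities, and then to conclude by the bathtub principle.

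\textbf{Setup and reduction.} Let $\Psi_{F_N}$ be a quasi-minimizing sequence in $\VD$, so that $\E_N[\Psi_{F_N}] \leq \ED + o(1)$, and set $\bar\rho_N(x) := (N-1)\rho_{\Psi_{F_N}}(\sqrt{N-1}\,x)$, a probability density on $\R^2$ with $\E_N[\Psi_{F_N}] = \int V \bar\rho_N$. An upper bound on $\ED$ from a fixed trial state (e.g.~$F\equiv 1$) combined with the growth hypothesis~(\ref{eq:increase V}) yields tightness of $\bar\rho_N$; after extraction, $\bar\rho_{N_k}\rightharpoonup\bar\rho$ weakly as Radon measures with $\int\bar\rho = 1$, and $\liminf_k \int V\bar\rho_{N_k}\geq \int V\bar\rho$ by truncation of $V$. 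Thus the theorem reduces to establishing the incompressibility of the limit,
\begin{equation}\label{eq:plan-bound}
\bar\rho(x)\leq \frac{1}{\pi\ell}\quad\text{for a.e.~}x\in\R^2,
\end{equation}
after which the bathtub principle applied to the admissible set in~(\ref{eq:bath tub}) immediately gives $\int V\bar\rho \geq E_V(\ell)$.

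\textbf{Plasma strategy for~(\ref{eq:plan-bound}).} The representation
$$|\Psi_F|^2(z_1,\ldots,z_N) = \frac{1}{Z_{F,N}}\exp\!\Bigl(-\sum_j|z_j|^2 + 2\ell\sum_{i<j}\log|z_i-z_j| + 2\log|F(z_1,\ldots,z_N)|\Bigr)$$
identifies $|\Psi_F|^2$ with a 2D one-component plasma. For $F\equiv 1$ the mean-field limit (in the coordinate $x = z/\sqrt{N}$) is the flat $1/(\pi\ell)$-droplet on the disc of radius $\sqrt\ell$, the cap $1/(\pi\ell)$ being dictated by the strength $\ell$ of the logarithmic repulsion via the Euler--Lagrange equation (taking the Laplacian of the confining potential divided by $4\pi\ell$ on the support, saturated at $1/(\pi\ell)$). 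Using the structure~(\ref{eq:choice correlations}) of $F$, the factor $f_1$ (of degree $\leq DN$) contributes an effective one-body potential that merely reshapes the equilibrium droplet without changing the saturation density, while the two-body factor $f_2$ (of degree $\leq D$ in each variable) contributes a genuinely two-body perturbation of bounded local strength.

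\textbf{Main obstacle.} The delicate point is the two-body factor $f_2$, which is the only feature that could in principle correlate particles into a local density exceeding $1/(\pi\ell)$. The crucial input is $\deg f_2 \leq D$ \emph{independent of $N$}: $2\log|f_2(z,z')|$ decomposes as a sum of at most $D$ logarithmic singularities in $z$ located at the roots of $f_2(\cdot,z')$, each carrying an $O(1)$ charge; the total contribution to the effective plasma Hamiltonian is thus $O(N)$, while the Coulomb self-energy of the unperturbed Laughlin state is $O(N^2)$. In the mean-field scaling this perturbation is subleading and cannot move the saturation density. Making this quantitative, presumably via the exchangeability of $|\Psi_F|^2$ combined with a local (mesoscopic-ball) comparison of the perturbed and unperturbed equilibria, is what I expect to be the technical heart of the argument.
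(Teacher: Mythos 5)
Your reduction and your identification of $f_2$ as the main difficulty are reasonable, but the proposal has two genuine gaps. First, the step you defer --- ``making this quantitative'' --- is the entire proof, and the route you sketch aims at a statement the paper deliberately avoids. You want the pointwise bound $\bar\rho\leq 1/(\pi\ell)$ on the weak limit of the \emph{actual} rescaled density $\muFone$. The paper never proves such a bound (and remarks that it can only hold in a weak sense, given numerical evidence on the Laughlin density); instead it proves the one-sided energy comparison $\int V_B\,\muFone \geq \int V_B\,\rhoF - C(N\eps)^{-1}\error$ where $\rhoF=\rhoMF_{\eps,0}$ is a \emph{mean-field reference density} satisfying $\rhoF\leq \frac{1}{\pi(\ell+m)}+O(\eps)$, which already suffices for~\eqref{eq:main incomp} after letting $N\to\infty$, $\eps\to0$, $B\to\infty$. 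That comparison is extracted by perturbing the plasma Hamiltonian with $\eps U$ (a quantitative Feynman--Hellmann argument) and justifying the mean-field limit of the free energy. The latter is the real technical heart and cannot be waved through: the two-body potential $-(\ell+m)\log|z-z'|-\log|g_2(z,z')|$ is not of the form $w(z-z')$, is not of positive type, and depends on $N$, so neither compactness methods nor standard positivity-based quantitative methods apply; the paper resorts to the Diaconis--Freedman quantitative de Finetti theorem plus a regularization of the Coulomb singularity. Your proposal contains no substitute for this.

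Second, your heuristic for why $f_2$ cannot raise the density is wrong in a way that matters. The $f_2$ contribution is \emph{not} subleading: there are $\binom{N}{2}$ pair terms $-2\log|f_2(z_i,z_j)|$, each of the same size as $-2(\ell+m)\log|z_i-z_j|$, so in the mean-field functional~\eqref{eq:MF func} the term $-\iint\rho(z)\log|g_2(z,z')|\rho(z')$ sits at exactly the same order as the Coulomb repulsion. If order-of-magnitude were the only issue, a leading-order \emph{attractive} two-body term of this size could perfectly well compress the gas. What actually saves the bound is a sign, not a size: $\log|g_2(\cdot,z')|$ and $\log|g_1|$ are subharmonic (zeros of holomorphic functions carry positive charge), so applying $\Delta_z$ to the Euler--Lagrange equation~\eqref{eq:MF var eq} yields $4\pi(\ell+m)\,\delta_\alpha*\rho \leq 4+\eps\sup|\Delta U|$ --- the phantom charges are purely repulsive and can only push the density down (Lemma~\ref{lem:MF func}). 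You should replace the ``$O(N)$ versus $O(N^2)$'' argument by this subharmonicity argument; as stated, your mechanism would not survive scrutiny.
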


That the estimate~\eqref{eq:main incomp} holds for any regular potential\footnote{The regularity assumption can be relaxed a bit.}  increasing at infinity is a weak formulation of the bound~\eqref{eq:intro incomp bound}. Indeed, observe that the energy in the potential $V$ may not be lower than the minimal value~\eqref{eq:bath tub} in the set of densities that satisfy~\eqref{eq:intro incomp bound} (note the rescaling of length and density units). It is well-known that the latter minimum energy is attained for a $\rho$ saturating the imposed $L ^{\infty}$ bound, a fact that is usually refered to as the ``bath-tub principle'', see~\cite[Theorem 1.14]{LL}. This weak version of a maximal density  bound provided by Theorem~\ref{thm:incomp main} is a consequence of the  two crucial characteristics of the FQHE: the restriction to the lowest Landau level and the strong, Laughlin-like, correlations. To clarify this, we make the following remarks. 

\begin{remark}[Illustrative comparisons]\label{rem:comparisons}\mbox{}\\
How would the energy~\eqref{eq:scale ener} behave in less constrained variational sets ? Three interesting cases are worth considering:
\begin{itemize}
\item \textit{Particles outside the} LLL. Suppose the single-particle Hilbert space was the full space $L^2 (\R ^2)$ instead of the constrained space $\LLL$. The minimization is then of course very simple and we would obtain $E_N = \min V$ by taking a minimizing sequence concentrating around a minimum point of $V$. Another way of saying this is that there is no upper bound whatsoever on the density of generic $L^2$ wave-functions, which is of course a trivial fact.
\item \textit{Uncorrelated bosons in the} LLL. For non interacting bosons in a strong magnetic field, one should consider the space $\bigotimes_s ^{N} \LLL$, the symmetric tensor product of $N$ copies of the LLL. The infimum in~\eqref{eq:scale ener} can then be computed considering uncorrelated trial states of the form $f ^{\otimes N}$, $f\in \LLL$. LLL functions do satisfy a kind of incompressibility property 
because they are of the form holomorphic $\times$ gaussian. This can be made precise by the inequality~\cite{ABN2,Car,LS}
\begin{equation}\label{eq:LLL incomp}
\sup_{z\in \C} \left|f (z) e^{-|z| ^2 / 2 }\right| ^2 \leq  \left\Vert f(.) e ^{-|\:.\:|/2} \right\Vert_{L ^2 (\R ^2)} ^2. 
\end{equation}
This is a much weaker notion however. (Compare~\eqref{eq:intro incomp bound} and~\eqref{eq:LLL incomp}, which leads to $\rhoP \leq 1$.) In our scaled variables, one may still construct a sequence of the form $f ^{\otimes N}$ concentrating around a minimum point of $V$ without violating~\eqref{eq:LLL incomp}. The liminf in~\eqref{eq:main incomp} is thus also equal to $\min V$  in this case.  
\item \textit{Minimally correlated fermions in the} LLL. Due to the Pauli exclusion principle, fermions can never be uncorrelated: the corresponding wave functions have to be antisymmetric w.r.t. exchange of particles: 
\begin{equation}\label{eq:Pauli}
\Psi(\xbf_1,\ldots,\xbf_i,\ldots,\xbf_j,\ldots,\xbf_N) = -\Psi(\xbf_1,\ldots,\xbf_j,\ldots,\xbf_i,\ldots,\xbf_N). 
\end{equation}
For  LLL wave functions (which are continuous) this implies 
$$\Psi(\xbf_i=\xbf_j) = 0 \mbox{ for any } i\neq j,$$
i.e. the wave function vanishes on the diagonals of the configuration space. Due to the holomorphy constraint, any $N$-body LLL fermionic wave-function is then of the form~\eqref{eq:PsiF}, where the Laughlin state is~\eqref{eq:intro Laughlin} with $\ell = 1$ and $F$ has bosonic symmetry. If $F = f ^{\otimes N}$, one could then talk of ``minimally correlated'' fermions\footnote{The Slater determinants for the Fermi sea at fixed total angular momentum can be obtained this way by taking $f(z) = z ^m,m\in \N$. Note that for $\ell = 1$ the Laughlin state is a Vandermonde determinant and hence describes free fermions.}. 

This case is covered by our theorem, and one obtains $E_V(1)$ as a lower bound to the energy. Observe that adding stronger correlations than those imposed by the Pauli principle, i.e. choosing an odd $\ell \geq 3$ increases the energy to the value $E_V(\ell)$ which is in general strictly larger than $E_V(1)$.
\end{itemize}
\hfill\qed
\end{remark}

A very natural question is that of the optimality of Theorem~\ref{thm:incomp main}. We can show that the lower bound~\eqref{eq:main incomp} is in fact optimal when the potential is radial, increasing or mexican-hat-like, the infimum being in fact asymptotically reached in the Laughlin phase~\eqref{eq:intro guess}. 

\begin{corollary}[\textbf{Optimization of the energy in radial potentials}]\label{cor:radial}\mbox{}\\
Let $V:\R ^2 \mapsto \R$ be as in Theorem~\ref{thm:incomp main}. Assume further that $V$ is radial, has at most polynomial growth at infinity,  and satisfies one of the two following assumptions 
\begin{enumerate}
\item $V$ is radially increasing,
\item $V$ has a ``mexican-hat'' shape: $V(|\xbf|)$ has a single local maximum at the origin and a single global minimum at some radius $R$.
\end{enumerate}
Then for $D$ large enough it holds 
\begin{equation}\label{eq:incomp opt}
\lim_{N\to \infty} \ED = E_V(\ell).  
\end{equation}
More precisely, in case $(1)$ 
\begin{equation}\label{eq:trial increas}
\E_N [\PsiLau] \to  E_V(\ell) \mbox{ when } N\to \infty
\end{equation}
and in case $(2)$ one can find a fixed number $\mb \in \R$ and a sequence $m(N) \in \N$ with $m \sim \mb N $ as $N\to \infty$ such that, defining 
\begin{equation}\label{eq:trial mexican 1}
\Psi_m (z_1,\ldots,z_N) := c_m \PsiLau (z_1,\ldots,z_N) \prod_{j=1} ^N z_j ^m 
\end{equation}
we have 
\begin{equation}\label{eq:trial mexican 2}
\E_N [\Psi_m] \to E_V(\ell) \mbox{ when } N\to \infty.
\end{equation}
\end{corollary}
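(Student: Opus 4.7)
The plan is to combine the lower bound of Theorem~\ref{thm:incomp main} with a matching upper bound obtained by evaluating $\E_N$ on explicit trial states built from the Laughlin function and, in case (2), from quasi-hole insertions. Both constructions rely on the fact that the one-body density of $\PsiLau$ and its quasi-hole variants, rescaled by a factor $\sqrt{N-1}$, converges as $N\to\infty$ to the indicator function $\frac{1}{\pi\ell}\one_{\Omega}$ of a precise bounded region $\Omega\subset\R^2$ of area $\pi\ell$. This is the content of the plasma analogy for the Laughlin state, which I would invoke as an external input (e.g.\ from~\cite{RSY1,RSY2}), together with the attendant Gaussian tail estimates on the density outside the bulk.

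First I would identify the minimizer of the bath-tub problem~\eqref{eq:bath tub}. By the bath-tub principle, any minimizer has the form $\rho_* = \frac{1}{\pi\ell}\one_{\Omega_*}$ with $|\Omega_*|=\pi\ell$, where $\Omega_*$ is, up to a null set, a sublevel set of $V$ of the required measure. Under the radial symmetry and the extra assumption on $V$, in case (1) $\Omega_* = D(0,\sqrt{\ell})$, while in case (2) $\Omega_* = \{R_-\leq |z|\leq R_+\}$ is a centered annulus with $R_+^2-R_-^2=\ell$, whose radii are fixed by the shape of $V$ near its minimum.

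Next I would compute the energy of the trial states. For case (1), the rescaled density of $\PsiLau$ converges to $\frac{1}{\pi\ell}\one_{D(0,\sqrt{\ell})}$; combined with the polynomial growth assumption on $V$ and the exponentially decaying tail of the plasma, dominated convergence gives $\E_N[\PsiLau]\to \int V\rho_* = E_V(\ell)$, which is precisely~\eqref{eq:trial increas}. For case (2), applying $\prod_j z_j^m$ to $\PsiLau$ amounts to adding a logarithmic impurity of weight $m$ at the origin in the plasma analogy. Choosing $m=m(N)\in\N$ with $m/(N-1)\to \mb:=R_-^2$ shifts the equilibrium distribution so that the rescaled density of $\Psi_m$ converges to $\frac{1}{\pi\ell}\one_{\Omega_*}$, and the same dominated-convergence argument yields~\eqref{eq:trial mexican 2}. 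Finally, since the $f_1$ factor of $\Psi_m$ is $z^m$ with $\deg(f_1)=m\sim\mb N$, the trial states belong to $\VD$ as soon as $D\geq \mb$, which explains the ``$D$ large enough'' condition. Combining $\E_N[\PsiLau]\to E_V(\ell)$ or $\E_N[\Psi_m]\to E_V(\ell)$ with the lower bound $\liminf \ED\geq E_V(\ell)$ of Theorem~\ref{thm:incomp main} then proves~\eqref{eq:incomp opt}.

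The main technical hurdle is the control of the rescaled one-body densities of $\PsiLau$ and $\Psi_m$: one needs not only weak convergence to the characteristic function of the appropriate disc or annulus, but also sufficient decay of these densities outside the bulk to integrate against potentials of polynomial growth. This is precisely what the plasma analogy provides, interpreting $|\PsiLau|^2$ (respectively $|\Psi_m|^2$) as the Gibbs measure of a 2D one-component plasma at inverse temperature $1/\ell$, with a point charge of strength $m$ pinned at the origin in the quasi-hole case. The macroscopic profile is given by the associated electrostatic equilibrium measure, and Gaussian confinement estimates take care of the tails. Once these ingredients are imported, the remainder of the argument is short and follows the scheme above.
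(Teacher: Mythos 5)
Your proposal is correct and follows essentially the same route as the paper: the same trial states $\PsiLau$ and $\Psi_m$ with $m\sim\mb N$, the same imported inputs from~\cite{RSY2} (quantitative convergence of the rescaled one-body density to $\frac{1}{\pi\ell}\one_{\Omega}$ on a disc or annulus, plus Gaussian decay outside the bulk to handle the polynomially growing $V$), and the same identification of the limit with the bath-tub minimizer via the choice $\mb=R_-^2$. The paper merely makes the tail control explicit through a partition of unity at radius $N^{\alpha}$ rather than a dominated-convergence phrasing, which is a presentational rather than substantive difference.
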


What this corollary says is that, for the particular potentials under consideration, the simplified ansatz~\eqref{eq:intro guess} is optimal, at least amongst states built on $\VD$. We \emph{conjecture} that the latter restriction is unnecessary and that the result still holds when one considers the full variational set $\Ker$. Equations~\eqref{eq:trial increas} and~\eqref{eq:trial mexican 2} are consequences of results obtained in~\cite{RSY2} on the densities of the Laughlin state and the Laughlin $\times$ giant vortex states~\eqref{eq:trial mexican 2}: their densities in fact converge to the optimizer of the ``bath-tub'' energy~\eqref{eq:bath tub}. The remarkable fact about~\eqref{eq:trial increas} is that the Laughlin state stays an approximate minimizer in any radially increasing potential. No matter how steep and narrow
the imposed trapping potential, it is impossible to compress the Laughlin state while staying within the space $\Ker$. Also,~\eqref{eq:trial mexican 2} shows that if the 
potential has a radial well along some circle, the optimal density can be obtained by simply adding a vortex localized at the origin to the Laughlin state, and optimize over the added phase circulation. 

Let us stress that, if Corollary~\ref{cor:radial} shows that the incompressibility upper bound can be saturated by some special states, other states of $\Ker$ can in fact have a significantly lower maximal density. This is for example the case of the states~\eqref{eq:trial mexican 1}, when $m \gg N ^2$, see again~\cite{RSY2} for details on this claim.

An interesting question concerns the generalization of the above result to more general potentials. For the simple form of the potentials we chose, it is not too difficult to construct a trial state whose density converges to the minimizer of the bath-tub energy. Constructing a trial state that does the same for a more general potential remains an \emph{open problem}. We \emph{conjecture} that a suitable trial state can always be constructed in the form ~\eqref{eq:intro guess}, which would be a step towards confirming the robustness of Laughlin-like correlations in general potentials. 

Before turning to the proofs  of our results, let us comment on the physical relevance of the potentials considered in Corollary~\ref{cor:radial}. See also the analysis of FQH interferometers in~\cite{LFS}.

\begin{remark}[Rotating trapped Bose gases]\mbox{}\\
The shape of the potentials considered in Corollary~\ref{cor:radial} is inspired from the experimental situation in rotating Bose gases, for which the relevant choice of Laughlin state is~\eqref{eq:intro Laughlin} with $\ell = 2$. Here the potential is usually radial and as we explained before, of the form
$$V(\xbf) = V_{\rm trap} (|\xbf|) - \Om ^2 |\xbf| ^2.$$
In usual experiments  the potential is to a good approximation quadratic, $V_{\rm trap}(r) = \Om_{\rm trap}  ^2 r ^2$, and it can be shown\footnote{The stability of the system requires of course $V$ to be bounded below, and thus $\Om_{\rm trap} \geq \Om$.} that the Laughlin state is an exact ground state (in the case of a pure contact interaction). It seems, however, unlikely that the Laughlin state could be stabilized in such a potential, because the validity of the LLL approximation requires that $\Om_{\rm trap}  ^2 - \Om ^2$ be extremely small, and thus the residual trapping, modified by the centrifugal force, to be extremely weak (see in particular~\cite{RRD} for a thorough discussion of experimental issues). To provide a better confinement against centrifugal forces, it has been proposed to use a steeper potential, a popular proposal being
$$V_{\rm trap} (r) = \Om_{\rm trap}  ^2 r ^2 + k r ^4.$$
The effective potential taking the centrifugal force into account is then of the form 
\begin{equation}\label{eq:trap Bose gas}
V(|\xbf|) = \om N |\xbf| ^2 + k N ^2 |\xbf| ^4 
\end{equation}
with space variables scaled as in~\eqref{eq:scale ener} and $\om = \Om_{\rm trap}  ^2 - \Om ^2$. This potential is radial, increasing if $\om >0$ and mexican-hat like when $\om<0$. Corollary~\ref{cor:radial} applies directly in the case where $\om = \om_0 N ^{-1}$ and $k = k_0 N ^{-2}$ for fixed constants $\om_0,k_0$, and our methods can also accommodate more general choices. The previous results are thus a step towards the confirmation of a conjecture made in~\cite{RSY2}, that states of the form~\eqref{eq:trial mexican 1} asymptotically optimize the potential energy in potentials of the form~\eqref{eq:trap Bose gas}. This can, in principle, be checked experimentally (see again~\cite{RRD}): the states~\eqref{eq:trial mexican 1} all have a very rigid density profile, almost constant in a region and falling rapidly to $0$ elsewhere~\cite{RSY2}. This differs markedly from the behavior of the Bose condensed phase, whose density follows closely the shape of the trap (see e.g.~\cite{ABD,ABN1,
ABN2}), and may thus serve as an experimental probe of the Laughlin phase. More details on these aspects can be found in~\cite{RSY1,RSY2} and references therein.\hfill\qed
\end{remark}

\medskip

The rest of the paper contains the proofs of our main results. The core of the argument is in Section~\ref{sec:incompressibility} where we argue that wave-functions built on $\VD$, in fact satisfy more precise incompressibility estimates than stated in Theorem 2.1, with controlled error terms, see Theorem~\ref{theo:incomp two}. Our main tool is Laughlin's plasma analogy, recalled below, and a new approach to the mean-field limit of classical Gibbs states. The method is based on a theorem by Diaconis and Freedman recalled in Appendix~\ref{sec:DiacFreed}, and we believe that it is of independent interest. The proof of Theorem~\ref{thm:incomp main} is concluded in Section~\ref{sec:concl main}. The additional ingredients needed for the proof of Corollary~\ref{cor:radial}, taken from ~\cite{RSY2}, are recalled in Subsection~\ref{sec:cor proof}. A final Section~\ref{sec:extensions} explains possible generalizations of our results for correlation factors more complicated than~\eqref{eq:choice correlations}.

\section{Incompressibility estimates and the plasma analogy}\label{sec:incompressibility}

\subsection{Main estimate}\label{sec:incomp main}

In this section we consider a correlation factor $F\in \VD$, that is we pick two functions $f_1,f_2 \in \B ^1, \B ^2$ and take $F$ to be of the form
\begin{equation}\label{eq:F two correl}
F(z_1,\ldots,z_N) := \prod_{j=1} ^N f_1 (z_j) \prod_{1\leq i < j \leq N}  f_2 (z_i,z_j) (z_i-z_j) ^m. 
\end{equation}
Without loss (changing the value of the parameter $m\in \N$ if necessary), we may assume that~$f_2$ is not identically $0$ on the diagonal $z_i = z_j$, in which case the set $\{ f_2 (z_i,z_j ) =  0\} \cap \{z_i = z_j\}$ is of dimension~$0$ (consists only of isolated points). Since the  normalized wave-function~\eqref{eq:general state} does not change when the polynomials $f_1,f_2$ are multiplied by constants, we may without loss assume that 
\begin{align}\label{eq:bounds f1 f2}
|f_1 (z)| &\leq |z| ^{DN} + 1 \nonumber \\
|f_2 (z,z')| &\leq |z| ^{D} + |z'| ^{D} + 1 
\end{align}
in view of our assumptions on their degrees.

\medskip

Fixing some $\ell \in \N$, we want to analyze the density of the corresponding fully-correlated state~\eqref{eq:PsiF}, appearing in the energy functional~\eqref{eq:start energy}. To this end we use Laughlin's plasma analogy~\cite{BCR,Lau,Lau2,LFS}, regarding  $|\Psi_F| ^2$ as the Gibbs measure of a system of classical charged particles. We first scale variables by defining 
\begin{equation}\label{eq:defi Gibbs} 
\muF (Z) := (N -1) ^N \left| \Psi_F \left(\sqrt{N-1} \: Z\right) \right| ^2
\end{equation}
and we can then write $\muF$ as ($\ZF\in \R$ normalizes the function in $L ^1$)
\begin{equation}\label{eq:plasma analogy}
\muF (Z) = \frac{1}{\ZF} \exp\left( -\frac1T H_F (Z) \right) 
\end{equation}
where the temperature is 
$$T=(N-1) ^{-1}$$
and the Hamiltonian function $H_F$ is 
\begin{multline}\label{eq:class hamil}
H_F (Z) = \sum_{j=1} ^N \left( |z_j| ^2 - \frac{2}{N-1} \log | g_1 (z_j)|\right) 
\\ + \frac{2}{N-1} \sum_{1\leq i < j \leq N} \left( - \left( \ell + m \right)\log|z_i-z_j| - \log |g_2 (z_i,z_j) |\right)   
\end{multline}
with 
\begin{equation}\label{eq:scale correl factor}
g_1 (z) = f_1 \left(\sqrt{N-1} z \right), \quad g_2 (z,z') = f_2 \left(\sqrt{N-1} z, \sqrt{N-1} z' \right).
\end{equation}
This is a classical Hamiltonian with mean-field pair interactions: the factor $(N-1) ^{-1}$ multiplying the $2$-body part, due to the scaling in~\eqref{eq:defi Gibbs}, makes the interaction energy of the same order of magnitude as the one-body energy. This is crucial for extracting information from the plasma analogy in the limit $N\to \infty$. Here we use in  an essential manner the structure ``holomorphic $\times$ gaussian'' of LLL functions.

One may interpret~\eqref{eq:class hamil} as the energy of $N$ charged point particles located at $z_1,\ldots,z_N \in \R ^2$ under the influence of the following potentials:
\begin{itemize}
\item A confining harmonic electrostatic potential (the $|z_j| ^2$ term in the one-body part).
\item The potential generated by fixed charges located at the zeros of the function $g_1$ (the $-\log |g_1 (z_j)|$ term in the one-body part).
\item The usual $2$D Coulomb repulsion between particles of same charge (the $-(\ell + m) \log|z_i-z_j|$ terms in the two-body part). 
\item The Coulomb repulsion due to phantom charges located at the zeros of $g_2 (z_i,z_j)$ (the $- \log |g_2 (z_i,z_j) |$ terms in the two-body part). This term describes an intricate two-body interaction and it is responsible for most of the difficulties we encounter below. Particle $i$ feels an additional charge attached to each other particle $j$ in a complicated manner encoded in the zeros of $g_2 (z_i,z_j)$. When particle $j$ moves in a straight line, the phantom charges attached to it may follow any algebraic curve in the plane.
\end{itemize}

By definition, $\muF$ minimizes the free energy functional 
\beq\label{eq:free energy funct}
\F [\mu] := \E [\mu] + T \intRN \mu(Z) \log (\mu(Z)) dZ\eeq
with the energy term 
\beq\label{eq:energy funct} 
\E [\mu]:= \intRN H_F (Z) \mu(Z) dZ
\eeq
on the set $\PP (\R ^{2N})$ of probability measures on $\R ^{2N}$, and we have the relation
\begin{equation}\label{eq:partition}
\F [\muF] = \inf_{\mu \in \PP (\R ^{2N})} \F [\mu] = - T \log \ZF = - \frac{\log \ZF}{N-1}. 
\end{equation}
Our goal here is to relate, in the large $N$ limit, the $N$-body minimization problem~\eqref{eq:partition} to the analogous problem for the
mean-field energy functional defined as 
\begin{multline}\label{eq:MF func}
\MFf [\rho] := \int_{\R ^2} \left(|z| ^2 -  \frac{2}{N-1} \log |g_1 (z)|\right) \rho(z) dz  
\\ + \int_{\R ^4} \rho(z) \big( - \left( \ell+m \right)\log|z- z '| - \log |g_2 (z,z')|\big) \rho(z')dz dz'
\end{multline}
for a probability measure $\rho \in \PP (\R ^2)$. We shall denote by $\MFe,\rhoMF$ the minimum of the functional~\eqref{eq:MF func} and a minimizer respectively. Both may be proved to exist by standard arguments. More precisely, introducing for any symmetric $\mu\in\PPs(\R ^{2N})$ the $k$-marginals\footnote{Or reduced $k$-body densities, or $k$-correlation functions. Note that by symmetry, the choice of the variables over which to integrate is not important.}
\begin{equation}\label{eq:marginals}
\mu  ^{(k)} (z_1,\ldots,z_k):= \int_{\R ^{2(N-k)}} \mu (z_1,\ldots,z_k,z'_{k+1},\ldots,z'_{N}) dz'_{k+1} dz'_N
\end{equation}
we should expect that, when $N$ is large, 
\begin{equation}\label{eq:formal marginal one}
\muFone \approx \rhoMF 
\end{equation}
and more generally
\begin{equation}\label{eq:formal marginals}
\muFk \approx \left( \rhoMF \right) ^{\otimes k} 
\end{equation}
in some appropriate sense, which is essentially a consequence of an energy estimate of the form
\begin{equation}\label{eq:formal energy}
- T \log \ZF  \approx N \MFe.
\end{equation}
The rationale behind these expectations is that, $N$ being large and the pair-interactions in~\eqref{eq:class hamil} scaled to contribute at the same level as the one-body part, one should expect an uncorrelated ansatz $\muF \approx \rho ^{\otimes N}$ to be a reasonable approximation for the minimization
problem~\eqref{eq:partition}. Then, since the temperature $T\to 0$ in this limit, the entropy term can be safely dropped as far as leading order considerations are concerned. With this ansatz and simplification, the minimization of $\F[\rho ^{\otimes N}]$ reduces to that of $\MFf[\rho]$, so that one may expect~\eqref{eq:formal marginals} and~\eqref{eq:formal energy} to be reasonable guesses.

The $k=1$ version of~\eqref{eq:formal marginals} provides the incompressibility property we are after: As we will show below 
\begin{equation}\label{eq:up bound MF density}
\rhoMF \leq \frac{1}{\ell \pi}   
\end{equation}
independently of $F$. In fact, as we have shown in~\cite{RSY1,RSY2}, the density of the Laughlin state saturates this bound in the sense that when $F\equiv 1$
\[
\muFone \approx \varrho ^{\rm MF}  = \begin{cases} \displaystyle \frac{1}{\ell \pi} \mbox{ in } \supp (\varrho ^{\rm MF}) \\  0 \mbox{ otherwise.} \end{cases}
\]
We can thus be a little bit more precise by interpreting our results as saying that no additional correlation factor of the type considered can compress the Laughlin state. 

The result we aim at in this section is the following version of incompressibility.

\begin{theorem}[\textbf{Incompressibility for states with two-body correlations}]\label{theo:incomp two}\mbox{}\\
Let $\muF$ be defined as above. Pick some test one-body potential $U$ such that $U,\Delta U \in L ^{\infty} (\R ^2)$. For any $N$ large enough and $\eps>0$ small enough there exists an absolutely continuous probability measure $\rhoF \in L ^1 (\R ^2)$ satisfying
\begin{equation}\label{eq:bound MF thm}
\rhoF \leq \frac{1}{\pi(\ell + m)} +\eps \frac{\sup |\Delta U|}{4 \pi (\ell + m)}
\end{equation}
such that 
\begin{equation}\label{eq:incomp two}
\int_{\R ^2} U \muFone \geq  \int_{\R ^2} U  \rhoF - C (N\eps) ^{-1}  \error(m,D,\eps U)
\end{equation} 
where $\error(m,D,\eps U)$ is a quantity satisfying 
\begin{equation}\label{eq:bound error}
\left| \error(m,D,\eps U) \right| \leq  1  + (\ell + m + D) \log N + D\log D + (1+\eps\,{\rm sup}\, |\Delta U|). 
\end{equation}
\end{theorem}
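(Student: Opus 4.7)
My plan is to take $\rhoF$ to be the minimizer of the perturbed mean-field functional $\MFf_\eps[\rho] := \MFf[\rho] + \eps\int_{\R^2} U\,\rho$, and to denote its infimum by $\MFe_\eps$. Both the density bound \eqref{eq:bound MF thm} and the inequality \eqref{eq:incomp two} will then follow from a Feynman--Hellmann type comparison between the Gibbs measure $\muF$ and the mean-field picture. Concretely, writing $\F_\eps := \F + \eps N\int_{\R^2} U\,\nu^{(1)}$ for the perturbed free energy and denoting its minimizer by $\mu^\eps$, the two minimality relations $\F[\muF] \leq \F[\mu^\eps]$ and $\F_\eps[\mu^\eps] \leq \F_\eps[\muF]$ combine at once into
\[
N\eps\,\int_{\R^2} U\,\muFone \;\geq\; \F_\eps[\mu^\eps] - \F[\muF].
\]
Combined with the trivial inequality $\MFe_\eps \geq \MFe + \eps\int_{\R^2} U\,\rhoF$ (which follows from $\rhoF$ minimizing $\MFf_\eps$), the whole proof reduces to a two-sided quantitative mean-field bound: an upper bound $\F[\muF] \leq N\MFe + O(\error)$ and a matching lower bound $\F_\eps[\mu^\eps] \geq N\MFe_\eps - O(\error)$, after which dividing through by $N\eps$ produces \eqref{eq:incomp two}.

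\textbf{Density bound.} The Euler--Lagrange condition for $\rhoF$ on $\supp(\rhoF)$ reads
\[
|z|^2 + \eps U(z) - \tfrac{2}{N-1}\log|g_1(z)| + 2\!\int_{\R^2}\!\bigl(-(\ell+m)\log|z-z'| - \log|g_2(z,z')|\bigr)\rhoF(z')\,dz' = \mathrm{cst}.
\]
Taking the distributional Laplacian in $z$ and using $\Delta\log|\cdot| = 2\pi\delta_0$, this becomes
\[
4 + \eps\,\Delta U(z) \;=\; \tfrac{4\pi}{N-1}\sum_{k}\delta(z-a_k) + 4\pi(\ell+m)\rhoF(z) + 4\pi\,\Sigma(z),
\]
where $(a_k)$ are the (finitely many) zeros of $g_1$ and $\Sigma(z) := \int\sum_l \delta(z-b_l(z'))\,\rhoF(z')\,dz' \geq 0$ collects the phantom charges coming from the zeros $b_l(z')$ of $z\mapsto g_2(z,z')$. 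Since the $\delta_{a_k}$ are supported on a finite set and $\Sigma\geq 0$, dropping them yields the pointwise bound $4\pi(\ell+m)\,\rhoF(z) \leq 4 + \eps\,\Delta U(z)$ almost everywhere on $\supp(\rhoF)$, which is exactly \eqref{eq:bound MF thm}.

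\textbf{Mean-field limit and main obstacle.} The upper bound $\F[\muF] \leq N\MFe + O(\error)$ follows from the Gibbs variational principle applied to the product trial state $\rhoMF^{\otimes N}$: its energy equals exactly $N\MFf[\rhoMF] = N\MFe$, and its entropy contributes $N H[\rhoMF]/(N-1) = O(1)$ because $\rhoMF$ is compactly supported with bounded density (as provided by the same Euler--Lagrange analysis at $\eps=0$). The matching \emph{lower} bound $\F_\eps[\mu^\eps] \geq N\MFe_\eps - O(\error)$ is where the real work lies. I would use the Diaconis--Freedman construction of Appendix~\ref{sec:DiacFreed}: it produces a probability measure $P^\eps$ on $\PP(\R^2)$ such that the de Finettized state $\hat\mu^\eps := \int \rho^{\otimes N}\,dP^\eps(\rho)$ approximates the one- and two-particle marginals of $\mu^\eps$ up to $O(1/N)$ in total variation. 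Plugging $\hat\mu^\eps$ into $\F_\eps$, applying Jensen's inequality to the entropy, and splitting the energy across $dP^\eps(\rho)$ formally gives $\F_\eps[\mu^\eps] \geq N\MFe_\eps - (\text{marginal defect})$. The main obstacle is that the pair interactions $-(\ell+m)\log|z-z'|$ and $-\log|g_2(z,z')|$ are unbounded on their singular sets, so TV closeness of the two-body marginals does not directly translate into closeness of energies. My remedy is to truncate both logarithms at a scale $\delta\sim N^{-1/2}$, run Diaconis--Freedman on the truncated functional, and control the truncation defect using the polynomial bounds \eqref{eq:bounds f1 f2} on $g_1,g_2$. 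A careful accounting produces losses of order $(\ell+m)\log N$ from the Coulomb term, $D(\log N + \log D)$ from the phantom charges (the $\log D$ factor arising from the up to $D$ zeros of $g_2(\cdot,z')$ and their possible clustering), and $O(1 + \eps\sup|\Delta U|)$ from the perturbation together with the one-body $\log|g_1|$ contribution---precisely the terms listed in \eqref{eq:bound error}. Chaining the two mean-field estimates with the Feynman--Hellmann inequality above, $N\eps\int U\,\muFone \geq \F_\eps[\mu^\eps] - \F[\muF] \geq N(\MFe_\eps - \MFe) - C\,\error \geq N\eps\int U\,\rhoF - C\,\error$, then yields \eqref{eq:incomp two}.
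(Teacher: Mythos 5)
Your proposal follows essentially the same route as the paper: $\rhoF$ is taken to be the minimizer $\rhoMF_{\eps,0}$ of the perturbed mean-field functional, the density bound comes from applying the Laplacian to the Euler--Lagrange equation and using subharmonicity of $\log|g_1|$, $\log|g_2|$, the Feynman--Hellmann chain of variational inequalities is identical to~\eqref{eq:MF concl 1}, and the free-energy lower bound is obtained exactly as in Proposition~\ref{pro:estim free ener} by regularizing the logarithmic singularities and invoking Diaconis--Freedman. The one quantitative slip is your truncation scale $\delta\sim N^{-1/2}$: the regularization defect from the phantom-charge term scales like $N\,\delta^{2/D}$ (cf.\ Lemma~\ref{lem:alpha to 0}), which diverges like $N^{1-1/D}$ for $D\geq 2$, so the smearing radius must be taken $D$-dependent (the paper uses $\alpha=N^{-D/2}$) to keep the error within the bound~\eqref{eq:bound error}; with that adjustment your accounting of the error terms matches the paper's.
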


Before we present the proof of this result, several remarks are in order:
\begin{itemize}
\item The $L^1$ function $\rhoF$ appearing in the theorem is in fact the solution to a variational problem for a perturbed functional related to~\eqref{eq:MF func},  that we introduce below. The perturbation is responsible for the $\eps$-dependent term in the right-hand side of~\eqref{eq:bound MF thm}.
\item The upper bound~\eqref{eq:bound MF thm} is a slightly weakened version of~\eqref{eq:up bound MF density}. In applications there is a trade-off between choosing $\eps$ very small so that the upper bound~\eqref{eq:bound MF thm} is as close as possible to~\eqref{eq:up bound MF density}, and $\eps$ very large so that the error term in the lower bound~\eqref{eq:incomp two} is as small as possible.
\item One may wonder whether the lower bound~\eqref{eq:incomp two} is optimal. Our method yields an upper bound in terms of a slightly different reference density, indicating that~\eqref{eq:incomp two} is in fact optimal up to remainder terms, see Remark~\ref{rem:converse incomp} below.   
\item We have made error terms explicit but what matters for the sequel is that 
$$ (N\eps) ^{-1} \error(m,D,\eps U) \to 0$$
when $N\to \infty$ and $1\gg \eps \gg N ^{-1}$ with $m$,  $D$  and $U$ being kept fixed.
\end{itemize}
We note also the following additions to Remark~\ref{rem:comparisons}.

\begin{remark}[Effect of correlations]\label{rem:effect correl}\mbox{}\\
One may be surprised that the marginal densities of highly correlated trial states are efficiently approximated in the form $\rho ^{\otimes k}$, which corresponds to independent and identically distributed classical particles. There are several answers to this. First, as noted in Remark~\ref{rem:comparisons}, the correlations are responsible for reducing substantially the maximum density. Then, one should note that although the marginal densities of $|\Psi_F| ^2$ are safely approximated by $\rho ^{\otimes k} $, the appropriate $\rho$ is in general not the square of the modulus of a LLL function. The emergence of the appropriate density profile is thus also a consequence of the correlations. Finally, the fact that the density $|\Psi_F| ^2$ factorizes is by no means an indication that the state itself is effectively uncorrelated. Indeed, much of FQHE physics is based on the correlations in the phase of the wave-function. It is of crucial importance~\cite{STG,Jai} that the correlations may be seen as due to 
each particle carrying quantized vortices, and thus phase singularities.\hfill \qed
\end{remark}

The rest of this section contains the proof of Theorem~\ref{theo:incomp two} and related comments. The main idea is as follows : We want to investigate the large $N$-limit of the minimization problem~\eqref{eq:partition} with the goal of justifying~\eqref{eq:formal marginals} and~\eqref{eq:formal energy}. This may look like a classical problem at first, but a little further thought reveals that existing methods are not enough to deal with it at the level of precision we require. First, note that the data of the problem, namely $g_1$, $g_2$ and $m$ may depend on $N$, which rules out methods based on compactness arguments~\cite{MS,Kie1,CLMP,KS}. Secondly, the two-body potential is not of the form $w(z-z')$ and is not of positive type\footnote{i.e. of positive Fourier transform.}, which is the crucial property on which known quantitative methods rely (see~\cite{RSY2,RS,SS} and references therein). 

Our approach is based on a quantitative version of the  Hewitt-Savage~\cite{HS} theorem. The theorem we use, due to Diaconis and Freedman~\cite{DF}, is recalled in Appendix~\ref{sec:DiacFreed} for the convenience of the reader. It states that the $k$-marginals of any classical symmetric $N$-body state are close to convex combinations of uncorrelated states of the form $\rho ^{\otimes k}$ when $k\ll \sqrt{N}$. In the regime we investigate, the temperature is so small that we may neglect the entropy term in~\eqref{eq:free energy funct}. The free-energy functional~\eqref{eq:free energy funct} is thus essentially an affine functional of $\mu$ and the infimum is approximately reached on the extremal points of the convex set $\PPs (\R ^{2N})$. But the Diaconis-Freedman theorem essentially says that the marginals of the latter are all close to states of the form $\rho ^{\otimes  k},\: \rho \in \PP (\R ^2)$, which explains why~\eqref{eq:formal energy} should hold in reasonable generality. 

\subsection{Preliminaries}\label{sec:prelim incomp}

Our 
strategy requires two modifications of the Hamiltonian:
\begin{itemize}
\item First, the Diaconis-Freedman theorem can be put to good use only with two-body Hamiltonians that are not singular along the diagonals $z_i=z_j$. 
Our Hamiltonian~\eqref{eq:class hamil} has Coulomb singularities, however, that we regularize by smearing out charges.
The charge distribution of a unit charge smeared over a disc of radius $\alpha$ is
\beq\label{eq:deltaalpha}
\delta_\alpha(z):=\begin{cases}
             \displaystyle \frac 1{\pi\alpha^2} &\mbox{ if } |z|\leq \alpha \\
             0 &\mbox{ if } |z|> \alpha.
            \end{cases}
\eeq
It tends to the delta function as $\alpha\to 0$. The corresponding electrostatic potential is
\begin{equation}\label{eq:log alpha}
- \log_\alpha| z|  := -\int \log|z-w|\, \delta_\alpha(w) dw .\end{equation}
Since $\Delta \log|z|=2\pi\delta(z)$ we have 
\beq \label{eq:delta_logalpha}
\Delta \log_\alpha|z|=2\pi\delta_\alpha(z).
\eeq
Using Newton's theorem (\cite{LL}, Theorem 9.7 or~\cite{RSY2}, Lemma 3.3) it is straightforward to compute
\begin{equation}\label{eq:log alpha 2}
- \logal |z| = \begin{cases}
             \displaystyle -\log\alpha+\half \left(1-\frac{|z|^2}{\alpha^2}\right) &\mbox{ if } |z|\leq \alpha \\
             - \log |z| &\mbox{ if } |z|\geq \alpha
            \end{cases}
\end{equation}
for any $z\in \R ^2$. Clearly we have 
\begin{equation}\label{eq:log dom logal}
-\log_\alpha |z|\leq \max\left( -\log |z|, -\log \alpha + \half\right)
\end{equation}
and $-\logal |z|$ tends pointwise to $-\log |z|$ for $z\neq 0$ as $\alpha\to 0$. Moreover,~\eqref{eq:log alpha 2} implies
\begin{equation}\label{eq:log alpha 3}
- \frac d{d\alpha}\logal |z| = \begin{cases}
             \displaystyle -\frac 1\alpha+\frac {|z|^2}{\alpha^3} &\mbox{ if } |z|\leq \alpha\\
             0 &\mbox{ if } |z|\geq \alpha.
            \end{cases}
\end{equation}

To obtain energy lower bounds we will replace the two-body potential 
$$ W(z,z') = -2(\ell+m) \log |z-z'| - 2\log|g_2 (z,z')|$$
appearing in~\eqref{eq:class hamil} by the regularized
\begin{equation}\label{eq:reg two body}
\Wal(z,z'):= - 2(\ell+m) \logal |z-z'| - 2 \log_\alpha \left| g_2 (z,z')\right|, 
\end{equation}
recalling that, by~\eqref{eq:log alpha},  
\beq \log_\alpha \left| g_2 (z,z')\right|=\int \log \left| g_2 (z,z')-w\right|\,\delta_\alpha(w)\,dw.\eeq 
We shall also use that the $\log$ of the absolute value of a holomorphic function is subharmonic, so 
\beq \Delta\log_\alpha|g_1(z)|\geq 0\quad\hbox{ and }\quad\Delta_z\log_\alpha|g_2(z,z')|\geq 0.\eeq
 Finally, for $\alpha=0$ we define 
\beq \log_0=\log \quad\hbox{ and }\quad \delta_0(z)=\delta(z).\eeq

\medskip

\item The second modification of the Hamiltonian is due to the following. To show that~\eqref{eq:formal marginal one} follows from~\eqref{eq:formal energy}, the usual way is to vary the one-body potential in the Hamiltonian and use the Feynman-Hellmann principle to obtain a weak-$\ast$ convergence in $L^1$. Here we are looking for quantitative estimates on objects that might depend on $N$ even in the mean-field approximation, so weak convergence arguments are not available. We thus implement the variation explicitly, adding a weak one-body potential $\eps U$ to the Hamiltonian, with $\eps$ a (small) number and $U \in L ^{\infty} (\R ^2)$ being the potential appearing in Theorem~\ref{theo:incomp two}. Analyzing the mean-field limits at $\eps = 0$ and $\eps\neq 0$ small enough will give information on the one-body density at $\eps = 0$, i.e. for the original problem.
\end{itemize}

These considerations lead us to the following modified Hamiltonian:
\begin{multline}\label{eq:modif hamil}
\Hepal (z_1,\ldots,z_N) : = \sum_{j=1} ^N \left( |z_j| ^2 - \frac{2}{N-1} \log | g_1 (z_j)| + \eps U (z_j) \right) \\
+ \frac{1}{N-1} \sum_{1\leq i < j \leq N} \Wal(z_i,z_j) 
\end{multline}
to which we associate the free-energy functional (still with $T = (N-1) ^{-1}$ and $\mu \in \PP (\R ^{2N} )$)
\begin{equation}\label{eq:free ener perturb}
\Fepal [\mu] := \int_{Z\in \R ^{2N}} \Hepal (Z) \mu (Z) dZ + T  \int_{Z\in \R ^{2N}} \mu (Z) \log (\mu(Z)) dZ. 
\end{equation}
We denote  by
\[
\muepal (Z) := \frac{1}{\Zepal} \exp\left(-\frac1T \Hepal (Z) \right)
\]
the associated Gibbs measure, and we have
\[
\Fepal [\muepal] = - T \log \Zepal = \inf_{\mu \in \PP (\R ^{2N})} \Fepal [\mu]. 
\]
We will relate the minimization of~\eqref{eq:free ener perturb} to that of the mean-field energy functional
\begin{equation}\label{eq:modif MFf}
\MFfepal [\rho] :=  \int_{\R ^2} \left(|z| ^2 - \frac{2}{N-1} \log |g_1 (z)| + \eps U(z) \right) \rho(z) dz  \\
+ \half \int_{\R ^4} \rho(z) \Wal(z,z') \rho(z')dz dz'
\end{equation}
with minimum $\MFeepal$ and minimizer $\rhoMFepal$ (amongst probability measures). With this notation, the original problem we are ultimately interested in corresponds to the choice $\ep=\al = 0$.

\medskip

As a preparation for the study of the MF limit below, we investigate the properties of this family of mean-field functionals. In particular we prove the upper bound~\eqref{eq:up bound MF density}, and show that some weak version of it survives the introduction of the small parameters $\al$ and~$\eps$. 

\begin{lemma}[\textbf{The mean-field densities}]\label{lem:MF func}\mbox{}\\
For any $\eps,\al$ small enough there exists at least one $\rhoMFepal$ minimizing~\eqref{eq:modif MFf} amongst probability measures. Any minimizer satisfies the variational equation
\begin{align}\label{eq:MF var eq}
\int_{\R ^2} \Wal (z,z') \rho (z') &dz' + \left( |z| ^2 - \frac{2}{N-1} \log |g_1 (z)| + \eps U \right) \nonumber
\\ &{\begin{cases}= \MFeepal + \half \int_{\R ^4} \rhoMFepal(z) \Wal(z,z') \rhoMFepal (z')dz dz' \mbox{ if  } z \in \supp (\rhoMFepal)\\{}
\\ \geq \MFeepal + \half \int_{\R ^4} \rhoMFepal(z) \Wal(z,z') \rhoMFepal (z')dz dz'  \mbox{ otherwise.}\end{cases}}
\end{align}
Moreover, we have the following bounds: 
\begin{itemize}
\item (Case $\alpha = 0$). For any $z\in \R^2$
\begin{equation}\label{eq:bound MF}
\rhoMF_{\ep,0} (z) \leq \frac{1}{\pi(\ell+m)} + \eps \frac{\sup |\Delta U|}{4 \pi (\ell+m)}
\end{equation}
\item (Case $\alpha > 0$). For any $z\in \R^2$
\begin{equation}\label{eq:bound MF perturb}
\delta_\alpha*\rhoMF_{\ep,\alpha} (z) \leq \frac{1}{\pi(\ell+m)} + \eps \frac{\sup |\Delta U|}{4 \pi (\ell+m)}
\end{equation}
\end{itemize}
\end{lemma}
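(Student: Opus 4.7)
The plan is to address the three assertions of Lemma~\ref{lem:MF func} in turn.

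\textbf{Existence.} I would apply the direct method of the calculus of variations to $\MFfepal$ on $\PP(\R^2)$. The harmonic confinement $|z|^2$ provides tightness of any minimizing sequence, while the logarithmic growth of $-\log|g_1|$ and $-\logal|g_2|$ at infinity, combined with the local integrability of $-\log|z-z'|$, keeps the functional bounded below. Lower semicontinuity of the interaction term along a tight weak-$*$ convergent subsequence is standard: writing $g_2(z,z')$ as a product of its linear factors in $z$ reduces everything to the positive-type Coulomb kernel $-\log|z-z'|$, for which the classical theory of logarithmic potentials applies. For $\al>0$ all singularities are replaced by bounded continuous kernels and the argument only simplifies.

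\textbf{Variational equation.} Given any signed measure $\eta$ with $\int\eta=0$ such that $\rhoMFepal+t\eta\geq 0$ for small $t>0$, expanding $\MFfepal[\rhoMFepal+t\eta]\geq\MFfepal[\rhoMFepal]$ to first order in $t$ and using the symmetry of $\Wal$ gives the usual obstacle-type conditions: the potential
\begin{equation*}
\Phi(z) := |z|^2 - \tfrac{2}{N-1}\log|g_1(z)| + \eps U(z) + \int_{\R^2}\Wal(z,z')\,\rhoMFepal(z')\,dz'
\end{equation*}
is constant on $\supp(\rhoMFepal)$ and no smaller elsewhere. Identifying the constant by pairing against $\rhoMFepal$ itself yields~\eqref{eq:MF var eq}.

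\textbf{Density bound.} This is the heart of the lemma. I would take the distributional Laplacian of the equality $\Phi\equiv\text{const}$, valid on the interior of $\supp(\rhoMFepal)$. Using $\Delta|z|^2=4$, $\Delta_z\logal|z-z'|=2\pi\delta_\al(z-z')$, together with the subharmonicity estimates $\Delta\log|g_1|\geq 0$ and $\Delta_z\logal|g_2(z,z')|\geq 0$ recorded just above (which come from $\log|f|$ being subharmonic for holomorphic $f$, a property preserved by convolution with $\delta_\al\geq 0$), one obtains pointwise on $\supp(\rhoMFepal)$ the identity
\begin{equation*}
4\pi(\ell+m)(\delta_\al*\rhoMFepal)(z) = 4 + \eps\Delta U(z) - \tfrac{2}{N-1}\Delta\log|g_1(z)| - 2\int\Delta_z\logal|g_2(z,z')|\,\rhoMFepal(z')\,dz'.
\end{equation*}
Dropping the two nonpositive contributions on the right yields~\eqref{eq:bound MF perturb} on the support. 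For $\al=0$ this is already~\eqref{eq:bound MF}, since $\rhoMFepal$ vanishes off its support; for $\al>0$ the bound extends to all of $\R^2$ by noting that $\delta_\al*\rhoMFepal$ vanishes outside an $\al$-neighborhood of $\supp(\rhoMFepal)$ and is controlled by its supremum on the support inside such a neighborhood.

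\textbf{The main obstacle} is to justify the distributional Laplacian manipulation given only the a priori $L^1$ regularity of $\rhoMFepal$: I would handle this by a preliminary bootstrap using the Euler--Lagrange equation itself, which expresses the logarithmic potential generated by $\rhoMFepal$ in closed form on its support and already implies $L^\infty$ regularity there, making the differentiation rigorous. The structural content of the bound is then transparent: among all contributions to $\Delta\Phi$, only the regularized Coulomb self-interaction $\Delta_z(-2(\ell+m)\logal|z-z'|)$ has a definite negative sign, so its coefficient $4\pi(\ell+m)$ must absorb $\Delta|z|^2=4$ plus the $\eps$-perturbation, yielding exactly the claimed constant $\tfrac{1}{\pi(\ell+m)}+\eps\tfrac{\sup|\Delta U|}{4\pi(\ell+m)}$.
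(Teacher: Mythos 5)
Your proposal is correct and follows essentially the same route as the paper: existence and the Euler--Lagrange conditions by the standard potential-theoretic arguments (the paper cites Saff--Totik), and the density bound by taking the distributional Laplacian of the variational equality on the support, using $\Delta|z|^2=4$, $\Delta_z\log_\alpha|z-z'|=2\pi\delta_\alpha(z-z')$, and the subharmonicity of $\log|g_1|$ and $\log_\alpha|g_2(\cdot,z')|$ to discard the terms of favourable sign. You in fact supply more detail than the paper does (the regularity bootstrap and the extension of the $\alpha>0$ bound off the support, points the paper passes over in silence).
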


Inequality~\eqref{eq:up bound MF density} in the case $F\equiv 1$ is simply~\eqref{eq:bound MF} with $m=0$ and $\eps = 0$.

\begin{proof}
The existence part and the variational equation follow by standard methods, see~\cite[Chapter 1]{ST}. In particular, note that the confining potential guarantees the tightness of minimizing sequences. The bounds in~\eqref{eq:bound MF}  and~\eqref{eq:bound MF perturb} 
are deduced by applying the Laplacian to the variational equation~\eqref{eq:MF var eq}, using that 
$$\Delta\log_\alpha|z-z'|=2\pi\delta_\alpha(z-z'), \; \Delta |z|^2=4$$ 
and 
$$\Delta\log_\alpha|g_1(z)|\geq 0, \; \Delta_z\log_\alpha|g_2(z,z')|\geq 0.$$
The latter inequalities follow from the fact that $g_1$ and $g_2$ are holomorphic functions. 
\end{proof}

In view of the properties of our regularized interaction potentials, it is clear that $\MFeepal$ approximates $\MFe_{\eps,0}$ in the limit $\alpha \to 0$, which is the content of the following:

\begin{lemma}[\textbf{Small $\alpha$ limit of the mean-field energy}]\label{lem:alpha to 0}\mbox{}\\
For any $\eps,\al$ small enough the following bound holds:
\begin{equation}\label{eq:MFe alpha to 0}
\MFeepal \leq \MFe_{\eps,0} \leq \MFeepal + C \left(1+\eps \sup | \Delta U |\right)  \,\alpha ^{\min(2,2/D)}
\end{equation}
where $D$ is the degree of $g_2(z,z')$ as a function of either $z$ or $z'$.
\end{lemma}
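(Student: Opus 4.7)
The inequality $\MFeepal\leq \MFe_{\eps,0}$ should be immediate from $\logal|z|\geq \log|z|$ pointwise. This pointwise comparison follows from~\eqref{eq:log alpha 2}: the two functions coincide for $|z|\geq\alpha$, agree at $|z|=\alpha$, and on $|z|\leq\alpha$ one has $\frac{d}{dr}\log r = 1/r \geq r/\alpha^2 = \frac{d}{dr}\logal r$, so $\log r$ increases faster and must lie below. Consequently $\Wal\leq W$ pointwise, hence $\MFfepal[\rho]\leq \MFf_{\eps,0}[\rho]$ for every probability density $\rho$, and taking infima gives the claim.

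For the converse I will use the minimizer $\rhoMFepal$ of $\MFfepal$ (from Lemma~\ref{lem:MF func}) as a trial function in $\MFf_{\eps,0}$, which yields
$$\MFe_{\eps,0}-\MFeepal\leq\half\int\int\rhoMFepal(z)\rhoMFepal(z')[W(z,z')-\Wal(z,z')]dzdz'.$$
Setting $\phi_\alpha(r):=\logal r-\log r\geq 0$ (supported in $\{r\leq\alpha\}$ with $L^1$-mass $\int_{\R^2}\phi_\alpha(|u|)du=\pi\alpha^2/4$), the integrand decomposes as $2(\ell+m)\phi_\alpha(|z-z'|)+2\phi_\alpha(|g_2(z,z')|)$, and I will estimate the two nonnegative pieces separately. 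In both cases the basic input is the bound $\|\delta_\alpha*\rhoMFepal\|_\infty\leq M = (1+O(\eps\sup|\Delta U|))/(\pi(\ell+m))$ furnished by Lemma~\ref{lem:MF func}.

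For the first, ``diagonal'' piece the key tool is the smoothing identity
$\int\rho(z)\logal|z-z'|dz = \int(\delta_\alpha*\rho)(u)\log|u-z'|du$
(a consequence of $\logal=\log*\delta_\alpha$ and a change of variables), which transfers the smearing from the interaction kernel onto the density. The double integral then reorganizes into a quadratic form in $\delta_\alpha*\rhoMFepal$, which is uniformly bounded by $M$ with $L^1$-mass one; combined with the $L^1$-mass $\pi\alpha^2/4$ of $\phi_\alpha$ and Newton-type cancellation, this produces a contribution of order $M\alpha^2\lesssim(1+\eps\sup|\Delta U|)\alpha^2$.

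For the second piece I will exploit the polynomial structure of $g_2$: for each fixed $z'$, $g_2(\cdot,z')$ is a polynomial of degree $\leq D$, so $\phi_\alpha(|g_2(z,z')|)$ vanishes outside $\{z:|g_2(z,z')|\leq\alpha\}$, which near each zero of multiplicity $k_j$ is contained in a quasi-disc of radius $\sim\alpha^{1/k_j}$. Summing over all zeros (whose multiplicities add to $\leq D$) gives Lebesgue measure $O(\alpha^{2/D})$, the worst case arising from a single zero of full multiplicity $D$, and combined with the density bound this produces a contribution of order $M\alpha^{2/D}$. Adding the two pieces and observing that $\min(2,2/D)=2/D$ when $D\geq 1$ (and equals $2$ when $g_2\equiv 1$, the second piece then being absent) yields the stated estimate. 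The principal technical obstacle lies in this second step: because the coefficients of $g_2(\cdot,z')$ themselves depend on $z'$, the sublevel-set area bound must be uniform in $z'$ across the support of $\rhoMFepal$, and the Jacobian $\partial_z g_2(z,z')$ degenerates near multiple roots, which is precisely what fixes the worst-case exponent $2/D$. The diagonal term, by contrast, reduces essentially to a standard two-dimensional Coulomb-type estimate familiar from the plasma analogy.
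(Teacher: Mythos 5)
The first inequality and the geometric input on $g_2$ (factorization into $D$ roots, sublevel sets $\{|g_2(\cdot,z')|\leq\alpha\}$ of area $O(\alpha^{2/D})$ uniformly in $z'$, covered by $\alpha$-discs so that the mollified density bound of Lemma~\ref{lem:MF func} applies) are exactly as in the paper. But for the upper bound you take a different route --- plugging the minimizer $\rhoMFepal$ of the regularized functional into the unregularized one and estimating $\iint\rhoMFepal\otimes\rhoMFepal\,(W-\Wal)$ directly --- and this route has a genuine gap. The kernel you must integrate, $\phi_\alpha=\logal-\log$ (and likewise $\phi_\alpha(|g_2(z,z')|)$), is \emph{logarithmically singular} on $\{z=z'\}$ (resp.\ on the zero set of $g_2(\cdot,z')$), whereas the only information you have on $\rhoMFepal$ is the bound $\Vert\delta_\alpha*\rhoMFepal\Vert_\infty\leq M$ from~\eqref{eq:bound MF perturb}. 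That bound controls $\int_\Lambda\rhoMFepal$ for sets coverable by $\alpha$-discs, hence integrals of $\rhoMFepal$ against \emph{bounded} functions supported on such sets, but it says nothing about the mass of $\rhoMFepal$ at scales $\ll\alpha$: nothing prevents the regularized minimizer (whose self-interaction $-\logal$ is bounded at the origin, so concentration is not energetically forbidden) from putting mass $\sim M\pi\alpha^2$ into a disc of radius $e^{-1/\alpha^{10}}$, in which case $\iint\rhoMFepal(z)\rhoMFepal(z')\phi_\alpha(|z-z'|)\,dz\,dz'$ is not $O(M\alpha^2)$ --- indeed $\MFf_{\eps,0}[\rhoMFepal]$ need not even be close to $\MFfepal[\rhoMFepal]$. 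Your proposed fixes do not close this: the identity $\logal=\log*\delta_\alpha$ converts $\iint\rho\rho\,\logal$ into $\iint(\delta_\alpha*\rho)\,\rho\,\log$, which is bilinear in $\delta_\alpha*\rho$ and the \emph{unmollified} $\rho$, not a quadratic form in $\delta_\alpha*\rho$; and a dyadic decomposition of the sublevel sets fails for the same reason, since the mass in $\{|z-z'|\leq 2^{-k}\alpha\}$ does not decay in $k$ under the available bound while the logarithmic weight grows linearly.

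This is precisely the difficulty the paper's argument is designed to avoid: instead of comparing $\alpha$ and $0$ in one step, one writes $\MFe_{\eps,0}-\MFeepal=-\int_0^\alpha\frac{d}{d\alpha'}\MFe_{\eps,\alpha'}\,d\alpha'$ and evaluates each derivative by Feynman--Hellmann at the minimizer $\rho^{\rm MF}_{\eps,\alpha'}$. The point is that, by~\eqref{eq:log alpha 3}, $\frac{d}{d\alpha'}W_{\alpha'}$ is \emph{bounded} (by $(\ell+m+1)/\alpha'$) and supported exactly on the sublevel sets $\{|z-z'|\leq\alpha'\}\cup\{|g_2(z,z')|\leq\alpha'\}$, so the measure-times-$M$ estimate~\eqref{bound_on_rho} suffices with no singularity to tame; integrating $(\alpha')^{\min(2,2/D)-1}$ from $0$ to $\alpha$ then gives the stated rate. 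To salvage your trial-state approach you would need an $L^\infty$ (or at least a uniform small-scale mass) bound on $\rhoMFepal$ itself, which Lemma~\ref{lem:MF func} does not provide; absent that, you should switch to the differentiation-in-$\alpha$ argument.
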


\begin{proof}
The lower bound $\MFeepal \leq \MFe_{\eps,0}$ is an obvious consequence of $-\logal |z|\leq -\log |z|$. 
For the upper bound we use the formula
\begin{equation}
\MFeepal-\MFe_{\eps,0}=\int_0^\alpha \frac d{d\alpha'} \MFe_{\eps,\alpha'}d\alpha'.
\end{equation}
We evaluate $\frac d{d\alpha'} \MFe_{\eps,\alpha'}$ by means of the Feynman-Hellman theorem:
\begin{equation}
\frac d{d\alpha'} \MFe_{\eps,\alpha'}=
\left [\frac d{d\alpha'}\mathcal E^{\rm MF}_{\eps,\alpha'}\right][\rho^{\rm MF}_{\eps,\alpha'}],  
\end{equation}
and, for any $\rho$,
\begin{equation}\label{2.32}
\left [\frac d{d\alpha'}\mathcal E^{\rm MF}_{\eps,\alpha'}\right ][\rho]=\frac 12 \iint \rho(z)\frac d{d\alpha'}W_{\alpha'}(z,z')\rho(z') dzdz'.
\end{equation}
Moreover, by
\eqref{eq:log alpha 3} we have
\begin{equation}\label{2.33}
\left|\frac d{d\alpha'}W_{\alpha'}(z,z')\right| \leq \frac 1{\alpha'}(\ell+m+1)\mbox{ if } |z-z'|\leq \alpha' \mbox{ or } 
 |g_2(z,z')|\leq \alpha'
\end{equation} 
and $0$ otherwise. 

Next we note that as a consequence of~\eqref{eq:bound MF perturb}, we have for any set $\Lambda $ of area $|\Lambda|$ that can be covered by $O(|\Lambda|/\alpha^2)$ discs of radius $\alpha$
\begin{equation}\label{bound_on_rho} 
\int_\Lambda\rho^{\rm MF}_{\eps,\alpha}(z)dz\leq C\,\frac{1+\eps \sup | \Delta U |}{\ell+m}\, |\Lambda|.\end{equation}
Since $g_2(z,z')$ has a factorization
\beq\label{factorization} g_2(z,z')=\prod_{i=1}^D(z-c_i(z'))\eeq
where the zeros $c_i(z')$ are algebraic functions of $z'$,  it is easy to see that
for any $z'$ the total area of the set of $z$'s where $|z-z'|\leq \alpha' \mbox{ or } |g_2(z,z')|\leq \alpha'$ is bounded by 
$$
\begin{cases}
C (\alpha')^{2/D} \mbox { if } D \geq 1 \\
C (\alpha') ^2 \mbox { if } D = 0 
\end{cases}
$$ 
where $C$ and $D$ depend only on $g_2$ but are independent of $z'$. Moreover this set may be covered by $O(\alpha^{2/D-2})$ discs of radius $\alpha ^2$ in the first case and $O(1)$ discs in the second case. We may then employ~\eqref{bound_on_rho} on it.

 Combining~\eqref{2.32}, ~\eqref{2.33},~\eqref{bound_on_rho},
and the normalization $ \int \rho_{\eps,\alpha'}^{\rm MF}=1$, we obtain
\beq \left|\frac d{d\alpha'} \MFe_{\eps,\alpha'}\right|\leq
C\left(1+\eps \sup |\Delta U | \right)(\alpha')^{\min(2,2/D)}.
\eeq
Integrating from 0 to $\alpha$ then gives~\eqref{eq:MFe alpha to 0}.

\end{proof}

\subsection{Mean-field/small temperature limit}\label{sec:MF lim}

The crux of our method is the proof of the following lower bounds for our family of many-body free energies. The corresponding upper bounds are easy to derive, using the usual $\rho ^{\otimes N}$ ansatz, but they shall not concern us at this stage. 

\begin{proposition}[\textbf{Free-energy lower bounds}]\label{pro:estim free ener}\mbox{}\\
Under the previous assumptions, the following holds for any $\alpha,\eps$ small enough and $\mu \in \PP (\R^{2N})$:
\begin{align}\label{eq:free ener MF}
\Fepal [\mu] &\geq N \MFeepal - C \left(\ell+m+D \right)\left(|\log \alpha| + 1 \right) \nonumber \\
&-C \left( D \log D + D \log N+ (\ell + m)\log (\ell + m ) \right).
\end{align}
\end{proposition}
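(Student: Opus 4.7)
The strategy is to pass from the $N$-body free energy to the mean-field functional by means of the Diaconis-Freedman theorem (Appendix~\ref{sec:DiacFreed}), exploiting that at the low temperature $T=(N-1)^{-1}$ the entropy contributes at a lower order than the energy. We may assume $\mu$ symmetric, since $\Fepal$ decreases under symmetrization by convexity of $x\mapsto x\log x$. Symmetry then allows us to rewrite
$$\int H_{\eps,\al}\,d\mu \;=\; N\int h\,d\mu^{(1)} + \frac N2 \int \Wal\,d\mu^{(2)},$$
with $h(z) = |z|^2 - \frac{2}{N-1}\log|g_1(z)| + \eps U(z)$, so that the energy depends on $\mu$ only through its one- and two-body marginals. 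The Diaconis-Freedman theorem produces a probability measure $P_\mu$ on $\PP(\R^2)$ such that, writing $\tilde\mu_k := \int \rho^{\otimes k}\,dP_\mu(\rho)$, one has $\|\mu^{(k)}-\tilde\mu_k\|_{TV}\leq Ck^2/N$. If $h,\Wal$ were uniformly bounded, replacing $\mu^{(k)}$ by $\tilde\mu_k$ would immediately yield
$$\int H_{\eps,\al}\,d\mu \;\geq\; N\int \MFfepal[\rho]\,dP_\mu(\rho) - C\bigl(\|h\|_\infty+\|\Wal\|_\infty\bigr)\;\geq\; N\MFeepal - C\bigl(\|h\|_\infty+\|\Wal\|_\infty\bigr),$$
since $P_\mu$-almost every $\rho$ is a probability density.

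The main difficulty is that neither $h$ nor $\Wal$ is globally bounded: $\Wal$ blows up like $(\ell+m+D)|\log\al|$ at short distance and grows like $(\ell+m+D)\log(|z|+|z'|)$ at infinity, while $h$ grows quadratically. I would handle this by truncating to a ball $B(0,R)$ with $R$ chosen large; we may a priori assume $\int|z|^2\,d\mu^{(1)}\leq C$ since otherwise~\eqref{eq:free ener MF} holds trivially. On $B(0,R)$ one has
$$\|\Wal\|_{L^\infty(B_R\times B_R)}\leq C(\ell+m+D)(|\log\al|+\log R+1),$$
and similarly for the non-quadratic part of $h$. Outside $B(0,R)$ the quadratic confinement in $h$ dominates the logarithmic pieces $\tfrac{2}{N-1}\log|g_1|$ and $\log|g_2|$ (at worst $O(D\log|z|)$), so the tails of $\mu^{(1)}$ and $\mu^{(2)}$ contribute a subdominant error. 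Applying Diaconis-Freedman against these truncated potentials, the prefactor $N$ in front of the two-body integral exactly cancels the $N^{-1}$ gain in the $k=2$ total-variation bound, producing the main error term $(\ell+m+D)(|\log\al|+1)$ in~\eqref{eq:free ener MF}. The corrections $D\log N$, $D\log D$ and $(\ell+m)\log(\ell+m)$ then arise from the truncation radius $R$ (which has to be taken polynomially in $N$ to control the probability that $\mu$ produces configurations with two particles closer than $\al$), from the factorization~\eqref{factorization} of $g_2$ into $D$ algebraic branches, and from the combinatorial covering estimates that must be kept explicit in the $N$-dependent parameters $D,m,g_1,g_2$.

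For the entropy I would use positivity of the relative entropy with respect to a Gaussian reference $\gamma(z)=\pi^{-1}e^{-|z|^2}$, giving $T\int \mu\log\mu\geq -TN\int(|z|^2+C)\,d\mu^{(1)}$; since $TN\to 1$, this is of order $\int(1+|z|^2)\,d\mu^{(1)}$, negligible against the $N$-scale of $N\int h\,d\mu^{(1)}$ and absorbable into the confining part. The principal obstacle, which dictates the shape of the error, is thus the interplay between the blow-up of $\|\Wal\|_\infty$ as $\al\to 0$ and the $N$-dependence of $D,m,g_1,g_2$: every estimate must be uniform and explicit in $\al,D,m,\ell$, so that the trade-off between $\al\to 0$ and $N\to\infty$ can be optimized when deducing Theorem~\ref{thm:incomp main} from Proposition~\ref{pro:estim free ener}. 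This is precisely where classical compactness-based methods for mean-field limits would fall short and why the quantitative Diaconis-Freedman approach is indispensable.
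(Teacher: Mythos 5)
Your proposal is correct in outline and follows the same overall architecture as the paper: reduction to symmetric $\mu$, control of the entropy by positivity of the relative entropy with respect to a Gaussian reference (then absorbed into the quadratic confinement since $TN\to 1$), and the Diaconis--Freedman construction combined with the smearing of $\Wal$ to pass to the mean-field functional. The one genuine difference lies in how Diaconis--Freedman is exploited. You invoke the total-variation estimate~\eqref{eq:DiacFreed}, which forces you to truncate to a ball $B(0,R)$, to assume an a priori second-moment bound on $\mu^{(1)}$, and to pay $\Vert \Wal\Vert_{L^\infty(B_R\times B_R)}\sim (\ell+m+D)(|\log\al|+\log R)$ with $R$ polynomial in $N$; this yields an extra $(\ell+m)\log N$ in the error, harmless for Theorem~\ref{thm:incomp main} but slightly weaker than~\eqref{eq:free ener MF} as stated. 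The paper instead uses the \emph{exact} marginal identities~\eqref{eq:marginals DF}, namely $\mu^{(1)}=\tilde\mu^{(1)}$ and $\mu^{(2)}=\tilde\mu^{(2)}+\frac{1}{N-1}\bigl[\tilde\mu^{(2)}-\tilde\mu^{(1)}(z)\delta(z-z')\bigr]$, so that no truncation is needed and the errors are two explicit integrals: the diagonal term $\frac{N}{2(N-1)}\int\tilde\mu^{(1)}(z)\Wal(z,z)\,dz$, finite only thanks to the smearing and bounded by $(\ell+m+D)(|\log\al|+1)$ as in~\eqref{eq:MF error}, and a $\frac{1}{N-1}\int\Wal\,\tilde\mu^{(2)}$ term controlled by borrowing part of the $|z|^2$ confinement as in~\eqref{eq:MF main term b}; the second moment of the minimizer of the modified mean-field functional is then bounded by an energy comparison,~\eqref{eq:MF main term 3}, rather than assumed. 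Both routes close, yours being more generic and the paper's sharper. Two small cautions on your version: the reduction to $\int|z|^2\,d\mu^{(1)}\leq C$ is not quite trivial (one must check via Jensen's inequality that the quadratic confinement beats the $-(\ell+m+D)\log(|z|+|z'|)$ attraction, and the resulting threshold depends on $\ell+m+D$ and $\log N$), and the truncation radius $R$ controls the spatial tails, not configurations with two particles $\al$-close --- the latter are handled entirely by the regularization.
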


\begin{proof}
In view of the symmetry of the Hamiltonian~\eqref{eq:modif hamil}, we may consider the minimization restricted to symmetric probabilities. For such a $\mu \in \PPs (\R ^{2N})$ the free-energy~\eqref{eq:free ener perturb} may be rewritten
\begin{multline*}
\Fepal [\mu] = N \int_{\R  ^2} \mu ^{(1)} (z) \left( |z| ^2 - \frac{2}{N-1} \log | g_1 (z)| + \eps U(z) \right) dz \\ 
+ \frac{N}{2}  \int_{\R  ^4} \mu ^{(2)} (z,z') \Wal(z,z') dz dz' 
+ \frac{1}{N-1}  \int_{Z\in \R ^{2N}} \mu (Z) \log (\mu(Z)) dZ
\end{multline*}
with $\mu ^{(1)}$ and $\mu ^{(2)}$ the first and second marginals of $\mu$, defined as in~\eqref{eq:marginals}. We first deal with the entropy term: by positivity of relative entropies (see e.g. Lemma 3.1 in~\cite{RSY2})
\[
\int_{Z\in \R ^{2N}} \mu (Z) \log (\mu(Z)) dZ \geq  \int_{Z\in \R ^{2N}} \mu (Z) \log (\nu(Z)) dZ
\]
for any probability measure $\nu$. We set
\[
\nu_0 (z) = c_0 \exp(-|z| ^2 ) 
\]
with $c_0$ a normalization constant and apply the above inequality with $\nu = \nu_0 ^{\otimes N}$. Integrating over $N-1$ variables we then obtain 
\begin{align*}
\int_{Z\in \R ^{2N}} \mu (Z) \log (\mu(Z)) dZ &\geq  N \int_{z\in \R ^{2}} \mu ^{(1)} (z) \log (\nu_0(z)) dz 
\\& = N \log (c_0) - N \int_{z\in \R ^{2}} |z| ^2\mu ^{(1)} (z) dz 
\end{align*}
which gives the lower bound 
\begin{align}\label{eq:low bound Fepal 1}
\Fepal [\mu] &\geq N \int_{\R  ^2} \mu ^{(1)} (z)  \left( (1-N ^{-1}) |z| ^2  - \frac{2}{N-1} \log | g_1 (z)| + \eps U(z)  \right)  dz \nonumber
\\&+ \frac{N}{2}  \int_{\R  ^4} \mu ^{(2)} (z,z') \Wal(z,z') dz dz' - C 
\end{align}
for any $\mu \in \PPs (\R ^{2N})$. Now we apply to $\mu$ the construction of~\cite{DF} recalled in Appendix~\ref{sec:DiacFreed}. This gives a $P_{\mu} \in \PP (\PP (\R ^2))$, a Borel probability measure over the probability measures of $\R ^2$, and a 
\begin{equation}\label{eq:mut}
\mut := \int_{\rho \in \PP (\R ^2)} \rho ^{\otimes N} P_{\mu} (d\rho) \in \PPs (\R ^{2N})
\end{equation}
such that, using~\eqref{eq:marginals DF},
\begin{align}\label{eq:mut two}
\mu ^{(1)} (z) &= \mut ^{(1)} (z) \nonumber
\\ \mu  ^{(2)}(z,z') &= \frac{N}{N-1} \mut ^{(2)} (z,z') - \frac{1}{N-1} \mut ^{(1)}(z) \delta(z-z')\nonumber \\& =\tilde\mu^{(2)}(z,z')+\frac 1{N-1}\left[\tilde\mu^{(2)}(z,z')-\mut ^{(1)}(z) \delta(z-z')\right]. 
\end{align}
It is at this point that it proves useful to have regularized the Coulomb part of the two-body interaction: since $\Wal$ is locally bounded from above we can insert~\eqref{eq:mut two} in~\eqref{eq:low bound Fepal 1}. We then obtain
\begin{align}\label{eq:low bound Fepal 2}
\Fepal[\mu] &\geq  N\left[ \int_{\R^2} \mut ^{(1)} (z)  \left( (1-2N ^{-1})|z| ^2  - \frac{2}{N-1} \log | g_1 (z)| + \eps U(z)  \right)  dz \nonumber
\right.\\& \left.+ \frac12  \int_{\R  ^4} \mut ^{(2)} (z,z') \Wal(z,z') dz dz'\right] \nonumber
\\&+ \int_{\R  ^4}\left[ (|z|^2+|z'|^2)/2+\frac{N}{2(N-1)} \Wal(z,z')\right]\mut ^{(2)} (z,z') dz dz'\nonumber
\\&  - \frac N{2(N-1}\int_{\R ^2} \mut ^{(1)} (z) \Wal(z,z)-C. %
\end{align} 
The last two lines are error terms, and in the second to last line we have borrowed 
\beq \int_{\R^2} |z|^2\mut ^{(1)} (z)dz=\half\int_{\R  ^4}(|z|^2+|z'|^2)\mut ^{(2)} (z,z')dz dz'\eeq
 from the main term in the first two lines; this account for the $-2N^{-1}|z|^2$ rather than $-N^{-1}|z|^2$ in the main term. 
 
 The error term in the second to last line is easily estimated, using ~\eqref{eq:bounds f1 f2} and~\eqref{eq:log alpha 2}:
\begin{align}\label{eq:low bound pot}
\Wal (z,z') &\geq - 2 (\ell + m ) \log \left(|z| + |z'|\right) - 2 \log \left( N ^{D/2} \left(|z| ^D + |z'| ^D \right)  + 1 \right) \nonumber
\\&\geq -  (\ell + m ) \left( \log |z| + \log |z'|\right) - D \log N - 2 \log \left( |z| ^D +1 \right) - 2 \log \left( |z| ^D +1 \right)
\end{align}
leads to
\begin{multline}\label{eq:MF main term b}
\int_{\R ^4}\left[  \half (|z| ^2 + |z'| ^2) + \frac{N}{2(N-1)} \Wal(z,z') dz dz' \right]\mut ^{(2)} (z,z') dz dz' 
\\ \geq - C (\ell + m) \log(\ell + m ) - C D \log N - C D\log D. 
\end{multline}

The term in the last line of ~\eqref{eq:low bound Fepal 2} is estimated as follows. Since $\mut$ is a probability, so is $\mut ^{(1)}$ and we have 
\begin{equation}\label{eq:MF error}
\int_{\R ^2} \mut ^{(1)} (z) \left((\ell + m) \logal (0) + \logal| g_2 (z,z)|\right) \geq C  ( \ell +m+D ) \left(\log \alpha + \frac12\right),
\end{equation}
where we used~\eqref{eq:log alpha 2}. 

Inserting now the representation~\eqref{eq:mut} into the main term proportional to $N$ in~\eqref{eq:low bound Fepal 2} we see that this part can be written as 
\beq N \int_{\rho \in \PP (\R ^2)} \tilde{\E} [\rho]\geq N\tilde E\eeq
where $\tilde{\E}$ is the modified mean-field functional 
\begin{multline*}
\tilde{\E} [\rho] := \int_{\R ^2} \left((1- 2N ^{-1}) |z| ^2 - \frac{2}{N-1} \log | g_1 (z)| + \eps U (z)\right) \rho(z) dz 
\\ + \frac{1}{2} \iint_{\R^4} \rho(z) \Wal(z,z') \rho(z') dz'
\end{multline*}
and $\tilde{E}$ its infimum over $\PP (\R ^2)$. We can then use a minimizer $\tilde{\rho}$ for this functional and obtain 
\begin{equation}\label{eq:MF main term 2}
 \tilde{E} = \MFfepal [\tilde{\rho}] - 2 N ^{-1} \intR |z| ^2 \tilde\rho \geq \MFeepal - 2 N ^{-1} \intR |z| ^2 \tilde\rho. 
\end{equation} 
The last term is again an error term, but up to this term and the error term give by~\eqref{eq:MF main term b} we have the desired lower bound. The final step is thus to estimate the term $2 N ^{-1} \intR |z| ^2 \tilde\rho$.\medskip

To this end we may on the one hand write  
\begin{multline*}
\tilde{\E} [\rho] = \int_{\R ^4} dz dz' \rho(z) \rho(z') \Big[ \frac{1}{2} \Wal(z,z') 
\\ + \frac{1}{2} \left((1-2N ^{-1}) (|z| ^2 +  |z'| ^2 ) - \frac{2}{N-1} \left(\log | g_1 (z)| +  \log | g_1 (z')|\right) + \eps U (z) + \eps U (z') \right) \Big]  
\end{multline*}
to obtain
\begin{multline*}
\tilde{\E} [\tilde \rho] \geq \left(\frac{1}{2} - 2 N ^{-1} \right) \intR |z| ^2 \tilde\rho + \inf_{\R ^4} \tilde{W} 
\\ \geq \frac{1}{2} \intR |z| ^2 \tilde\rho - C \left( 1 + D \log D + D \log N+ (\ell + m)\log (\ell + m) \right)
\end{multline*}
where $\tilde W$ is the two-body potential
\[
 \tilde{W} (z,z') = \frac{1}{4} \left( |z| ^2 + |z'| ^2 \right) -\frac{1}{2(N-1)} \left ( \log | g_1 (z)| + \log | g_1 (z')| \right) + \frac{1}{2} \Wal(z,z') 
\]
and a lower bound to its infimum is derived by elementary considerations similar to~\eqref{eq:low bound pot}. On the other hand, using as trial state the normalized characteristic function of a ball $B(0,1)$ centered at $0$ of radius 1,
\[
 \rho ^{\rm trial} = \frac{1}{|B(0,1)|} \one_{B(0,1)} 
\]
we easily have 
\[
 \tilde{E} \leq \tilde{\E} [\rho ^{\rm trial}] \leq C \left( 1+ D + \ell + m\right)
\]
from which 
\begin{equation}\label{eq:MF main term 3}
\intR |z| ^2 \tilde\rho \leq C \left( 1 + D \log \left(D\right) + (\ell + m)\log (\ell + m) \right) + D \log N 
\end{equation}
follows.

\end{proof}

\subsection{Conclusion: proof of Theorem~\ref{theo:incomp two}}\label{sec:incomp concl}

We recall that the $\mu_F$ we are interested in is equal to $\mu_{0,0}$ in the notation of Subsection~\ref{sec:prelim incomp}. Let $\ep >0$ be chosen small enough that the results of Sections~\ref{sec:prelim incomp} and~\ref{sec:MF lim} may be applied. We first write
\begin{multline}\label{eq:MF concl 1}
N \MFe_{0,0} +  N \eps \int_{\R ^2} U \mu_{0,0} ^{(1)}  + C  \geq \F_{0,0} [\mu_{0,0}] + N \eps \int_{\R ^2} U \mu_{0,0} ^{(1)}
\\ \geq  \F_{\eps,0} [\mu_{\eps,0}] \geq  \F_{\eps,\alpha} [\mu_{\eps,0}] \geq \F_{\eps,\alpha} [\mu_{\eps,\alpha}].
\end{multline}
The first inequality is proved using $(\rhoMF_{0,0}) ^{\otimes N}$ as a trial state for $\F_{0,0}$. The entropy term 
\[
 T \intRN (\rhoMF_{0,0}) ^{\otimes N} \log \left( (\rhoMF_{0,0}) ^{\otimes N} \right) =  N T \intR \rhoMF_{0,0} \log \rhoMF_{0,0}
\]
is bounded above using~\eqref{eq:bound MF} and recalling that the temperature is of order $N ^{-1}$. The other inequalities in~\eqref{eq:MF concl 1} use either the variational principle or $-\log |z|\geq -\log_\alpha |z|$.

Next, we use first Proposition~\ref{pro:estim free ener} and then Lemma~\ref{lem:alpha to 0} to obtain
\begin{align*}
\F_{\eps,\alpha} [\mu_{\eps,\alpha}]  & \geq N \MFeepal - C \left(\ell +m+D \right)\left(| \log \alpha | + 1 \right)\\
&-C \left( D\log D + D \log N + (\ell + m) \log (\ell + m) \right)\\
&\geq N \MFe_{\eps,0}  - C \left(\ell +m +D \right)\left(| \log \alpha | + 1 \right) - 
 CN(1+\eps\,{\rm sup}\, |\Delta U|)\,\alpha^{\min(2,2/D)}
\\
&-C \left( D\log D + D \log N + (\ell + m) \log (\ell + m) \right).
\end{align*}
Then 
\begin{equation*}
\MFe_{\eps,0} = \MFf_{0,0}[\rhoMF_{\eps,0}]  + \eps \int_{\R ^2} U  \rhoMF_{\eps,0} \geq \MFe_{0,0} + \eps  \int_{\R ^2} U  \rhoMF_{\eps,0}
\end{equation*}
by the variational principle applied to $\MFf_{0,0}$. Summing up we have (recall that $\eps >0$)
\begin{align}\label{eq:final density estimate}
\int_{\R ^2} U \mu_{0,0} ^{(1)} &\geq  \int_{\R ^2} U  \rhoMF_{\eps,0} \nonumber
\\ &- C (N\eps) ^{-1} \left( 1 +  \left(\ell + m+D \right)\left(| \log \alpha | + 1 \right) +  N  (1+\eps\,{\rm sup}\, |\Delta U|)\,\alpha^{\min(2,2/D)}
 \right) \nonumber
\\ & -C (N\eps) ^{-1} \left( D\log D + D \log N + (\ell + m) \log (\ell + m)\right) \nonumber
\\ &\geq  \int_{\R ^2} U  \rhoMF_{\eps,0} - C (N\eps) ^{-1} \left( 1  + (\ell + m + D) \log N \right)
\\ & -C (N\eps) ^{-1} \left(D\log D + D \log N +  (1+\eps\,{\rm sup}\, |\Delta U|)\right) 
\end{align}
where we have chosen $\alpha = N^{-D/2}$ if $D\geq 1$ and $\alpha = N ^{-1/2}$ if $D=0$ to obtain the last inequality. Recalling that $\mu_{0,0} = \muF$, the lower bound~\eqref{eq:incomp two} is proved, with $\rhoF := \rhoMF_{\eps,0}$. This density satisfies~\eqref{eq:bound MF thm} by Lemma~\ref{lem:MF func}. 

\hfill \qed

\medskip

\begin{remark}[\textbf{The opposite inequality}]\label{rem:converse incomp}\mbox{}\\
Following the very same steps as above but using a perturbation potential $-\eps U$ instead of $\eps U$ in the Hamiltonian one obtains
\begin{equation}\label{eq:incomp two sup}
\int_{\R ^2} U \muFone \leq  \int_{\R ^2} U  \hat{\rho}_F + C\left(N\eps \right) ^{-1} \error (m,D,\eps U).
\end{equation} 
with $\hat{\rho}_F := \rhoMF_{-\eps,0}$. This density also satisfies~\eqref{eq:bound MF thm} and the above estimate is thus a kind of  converse to~\eqref{eq:incomp two}. Note that the reference density is different however. We shall not use this remark anywhere in the paper.
\hfill\qed

\end{remark}

\section{Conclusion of the proofs}\label{sec:proof concl}

Here we conclude the proofs of Theorem~\ref{thm:incomp main} and Corollary~\ref{cor:radial}, in Subsections~\ref{sec:concl main} and~\ref{sec:cor proof} respectively.

\subsection{Response to external potentials: proof of Theorem~\ref{thm:incomp main}}\label{sec:concl main}

We now  bound from below the energy $\E_N [\Psi_F]$ when $F \in \VD$ as defined in~\eqref{eq:var set 2 D}. Without loss, we write $F$ as in~\eqref{eq:F two correl}, with 
\begin{equation}\label{eq:assum step 2}
m+\deg(f_2)\leq D \mbox{ and } \deg(f_1) \leq D N. 
\end{equation}
and assume that~\eqref{eq:bounds f1 f2} holds. We will apply the analysis of Section~\ref{sec:incompressibility}. Let us pick a large constant $B$ (to be tuned later on) and define the truncated potential
\begin{equation}\label{eq:truncated potential}
V_B (\xbf):= \min\{V(\xbf),B\}. 
\end{equation}
Thanks to~\eqref{eq:increase V}, this potential is constant outside of some ball centered at the origin and satisfies the assumption of Theorem~\ref{theo:incomp two}. We may thus apply this result with $U=V_B$ and the correlation factor $F$ at hand. In view of~\eqref{eq:defi Gibbs} we have 
\[
\mu_F ^{(1)} (z) = (N-1) \rhoPF \left( \sqrt{N-1}\: z \right) 
\]
and the theorem implies that there exists a $\rhoF$ of unit $L ^1$ norm satisfying
\[
0\leq \rhoF \leq \frac{1}{\pi(\ell + m)} + \frac{\eps \: \sup |\Delta V_B| }{\pi (\ell + m)}
\]
such that 
\begin{align}\label{eq:appli incomp two}
\E_N [\Psi_F] &= (N-1) \intR V_B (\xbf) \rhoPF \left( \sqrt{N-1}\: \xbf \right)d\xbf  \nonumber \\
&\geq  \int_{\R ^2} V_B  \rhoF - C (N\eps) ^{-1} \error (m,D,\eps V_B).
\end{align}
Here, by assumption, $m$ and $D$ are fixed when $N\to \infty$. Passing then to the limit $N\to \infty$ at fixed $\eps$ we obtain
\[
\liminf_{N\to \infty}  \E_N [\Psi_F] \geq \inf\left\{ \intR V_B \rho,\: 0\leq \rho \leq \frac{1}{\pi(\ell + m)} + \frac{\eps \: \sup |\Delta V_B| }{\pi (\ell +m)} \right\}.
\]
We may then pass to the limit $\eps \to 0$: 
\[
\liminf_{N\to \infty}  \E_N [\Psi_F] \geq \inf\left\{ \intR V_B \rho,\: 0\leq \rho \leq \frac{1}{\pi(\ell + m)} \right\}
\]
and finally to the limit $B\to \infty$, which yields
\[
\liminf_{N\to \infty}  \E_N [\Psi_F] \geq E_V \left(\ell +  m \right) \geq E_V (\ell).
\]
The necessary continuity of the bath-tub energy~\eqref{eq:bath tub} as a function of the upper bound on the admissible trial states and the cut-off of the potential are easily deduced from the explicit formulae of, e.g.,~\cite[Theorem 1.14]{LL}. In fact, if $B$ is large enough the bath-tub energy in the truncated potential $V_B$ is equal to the bath-tub energy in the potential $V$.

\hfill \qed

\medskip

\subsection{Optimality in radial potentials: proof of Corollary~\ref{cor:radial}}\label{sec:cor proof} 

Given Theorem~\ref{thm:incomp main}, the only thing left to do is the proof of~\eqref{eq:trial increas} and~\eqref{eq:trial mexican 2}. We thus consider the special trial functions~\eqref{eq:trial mexican 1}, that are built using correlation factors of the form 
\begin{equation}\label{eq:F GV Lau}
F(z_1,\ldots,z_N) = \prod_{j=1} ^N z_j ^m, \quad m \in \N.  
\end{equation}
The corresponding classical Hamiltonians~\eqref{eq:class hamil} that we associate to them in order to analyze the one-body densities of the states $\Psi_F$ have purely Coulomb two-body interactions ($g_2 \equiv 1$ in this case). One can take advantage of this fact to analyze the classical mean-field limit with a different method than that we used in Section~\ref{sec:incompressibility}. This was done in~\cite{RSY2} and the method leads to somewhat stronger estimates. 

We continue to use the notation of Section~\ref{sec:incompressibility} to quote some results from~\cite{RSY2}. Thus $\mu_F ^{(1)}$ is the rescaled one-body density of the state $\Psi_F$. Ultimately it will be sufficient to take $m\sim CN$ in the limit $N\to \infty$, so we may\footnote{In~\cite{RSY2} we only considered the case $\ell = 2$, but the whole analysis adapts to any fixed $\ell$. Lengths were scaled by a factor $\sqrt{N}$ instead of $\sqrt{N-1}$, so the formulas we quote have to be slightly modified.} invoke~\cite[Theorem 3.1, Item 1]{RSY2}: for any regular enough $U:\R ^2 \mapsto \R$
\begin{equation}\label{eq:density GV}
\left\vert \intR U \left(\muFone - \rhoMF\right)  \right\vert \leq  C N^{-1/2} \log N \Vert \nabla U \Vert_{L ^2 (\R ^2)} + C N ^{-1/2}\Vert \nabla U \Vert_{L ^{\infty} (\R ^2)}
\end{equation}
where ``regular enough'' only means that we require the norms appearing on the right-hand side of the above equation to be finite. Here $\rhoMF$, which was denoted $\rhoMFel$ in~\cite{RSY2}, is the minimizer of the functional~\eqref{eq:MF func} corresponding to the choice~\eqref{eq:F GV Lau}. This means that we take $m = 0$, $g_2 \equiv 1$ and 
$$ -\frac{1}{N-1} \log |g_1 (z) |= - \frac{m}{N-1} \log |z|. $$
In this case, $\rhoMF$ can be explicitly computed, see~\cite[Proposition 3.1]{RSY2}:
\begin{align}\label{eq:MF GV Lau}
\rhoMF &= \frac{1}{\ell \pi} \one_{B(0,\sqrt{\ell})} \mbox{ if } m=0  \\ 
\rhoMF &= \frac{1}{\ell \pi} \one_{\AN} \mbox{ if } m>0 \label{eq:exp rhoMFmh}
\end{align}
where $\AN$ is the annulus of inner radius $\Rminus = \sqrt{m/(N-1)}$ and outer radius $\Rplus = \sqrt{(\ell + m)/(N-1)}$ centered at the origin. 

Of course we cannot apply~\eqref{eq:density GV} directly with $U=V$, since the norms involved in the estimate are infinite for the latter potential. We thus first split $V$ in two parts
\begin{equation}\label{eq:split V}
V(\xbf) = \chiin (\xbf) V(\xbf) + \chiout (\xbf) V(\xbf)
\end{equation}
using a smooth partition of unity $\chiin + \chiout \equiv 1$, where $\chiin = 1$ in $B(0,R)$ and $\chiin = 0$ in $B(0,2R) ^c$  for some $R$ to be chosen later on. We will use~\eqref{eq:density GV} to deal with the $\chiin V$ part, and show the contribution of the $\chiout V$ part to be negligible using 
\begin{multline}\label{eq:1 particle decay}
\muFone (z) \leq C_1 \exp\left( -C_2 N \left(\left( |z|- \sqrt{\frac{m}{N-1}} \right)^2 -  \log N \right)\right) 
\\ \mbox{when } \left||z|-\sqrt{\frac{m}{N-1}} \right| \geq C_3. 
\end{multline}
which is~\cite[Equation (3.16)]{RSY2} and we recall that $m \sim CN$ in our case. This estimate implies that for any power $\alpha >0$ and $N$ large enough
\begin{equation}\label{eq:1 particle decay simple}
\muFone (z) \leq C_1 \exp\left( -C_2 N |z| ^2  \right) 
\\ \mbox{when } |z| \geq N ^{\alpha}, 
\end{equation}
where the value of the constants $C_1,C_2$ have changed. Choosing $R= N ^{\alpha}$ for some small (but fixed) power $\alpha>0$ we then clearly have 
\begin{equation}\label{eq:contribution out}
(N-1) \intR \chiout (\xbf) V(\xbf) \rhoPF \left( \sqrt{N-1} \: \xbf\right) d\xbf = \intR \chiout (\xbf) V(\xbf) \muFone  \left(\xbf\right) d\xbf \to 0
\end{equation}
when  $N \to \infty$. This is the one place where we use the assumption that $V$ grows at most polynomially at infinity (which, in view of~\eqref{eq:1 particle decay simple} could be relaxed a bit).

Next we can use~\eqref{eq:density GV} to show that 
\begin{align}\label{eq:contribution in}
(N-1) \intR \chiin (\xbf) V(\xbf) \rhoPF \left( \sqrt{N-1} \: \xbf\right) d\xbf &= \intR \chiin (\xbf) V(\xbf) \muFone  \left(\xbf\right) d\xbf \nonumber
\\ &\sim \intR V(\xbf) \rhoMF (\xbf) d\xbf \mbox{ when } N \to \infty.
\end{align} 
We have already estimated very similar terms in~\cite[Section 4]{RSY2} and will not reproduce all computations here. Simply observe that $\chiin$ can clearly be chosen with $|\nabla \chiin| \leq C N ^{\al}$. Also, since we are free to choose the power $\alpha$ as small as desired, using the assumption that $V$ grows at most poynomially at infinity, we may guarantee that the norms of $U = \chiin V$ that appear in the right-hand side of~\eqref{eq:density GV} grow at most as $N ^{c\alpha}$ for some constant $c>0$. This allows to control the error term as in~\cite[Section 4]{RSY2}, and proves~\eqref{eq:contribution in}.

Gathering~\eqref{eq:split V},~\eqref{eq:contribution out} and~\eqref{eq:contribution in} we have now proved that 
\begin{equation}\label{eq:up bound final}
\E_N [\Psi_F] \sim \intR V(\xbf) \rhoMF (\xbf) d\xbf \mbox{ when } N \to \infty 
\end{equation}
where $F$ is as in~\eqref{eq:F GV Lau} and we have assumed $m\sim C N$. We now discuss this final result as a function of 
\begin{equation}\label{eq:m bar}
\mb = \lim_{N\to \infty} \frac{m}{N}. 
\end{equation}
We have, in $L ^{\infty}$ norm,
\begin{align}\label{eq:MF GV Lau lim}
\rhoMF &\to \frac{1}{\ell \pi} \one_{B(0,\sqrt{\ell})} \mbox{ if } \mb =0  \\ 
\rhoMF &\to \frac{1}{\ell \pi} \one_{  \sqrt{\mb} \leq |\xbf| \leq \sqrt{\ell + \mb} } \mbox{ if } \mb > 0 \label{eq:exp rhoMFmh lim},
\end{align}
which allows to conclude the proof. Indeed, it is well-known~\cite[Theorem 1.14]{LL} that the infimum in the bath-tub energy~\eqref{eq:bath tub} is attained for a density saturating the constraint $\rho \leq (\pi \ell ) ^{-1}$. The minimizer is explicit as a function of the potential $V$, and in the cases we consider here it is easy to see that it is exactly equal to~\eqref{eq:MF GV Lau lim}, provided a proper choice of $\mb$ is made.

In case (1) of Corollary~\ref{cor:radial} we take the pure Laughlin state, i.e. $F\equiv 1$ and so $\mb = 0$, and we obtain 
\[
\E_N [\Psi_F] \to \frac{1}{\ell \pi} \int_{B(0,\sqrt{\ell})} V(\xbf) d\xbf \mbox{ when } N \to \infty 
\]
and the latter quantity is of course equal to $E_V (\ell)$, the bath-tub energy defined in~\eqref{eq:bath tub}. Indeed, if $V$ is radial increasing, the minimizer of the bath-tub energy is simply~\eqref{eq:MF GV Lau lim}: the density has to saturate the bound $\rho \leq (\pi \ell ) ^{-1}$ on its support, and it is clear that the optimal choice is to take this support to be a disc centered on the minimum of $V$, i.e at the origin. This proves~\eqref{eq:trial increas}.

In case (2), a similar reasoning yields that the minimizer of $E_V (\ell)$ is given by
\[
\frac{1}{\ell \pi} \one_{  A \leq r\leq B } 
\]
for some $A,B>0$ tuned so that the above function is normalized in $L ^1$, i.e. $B ^2 - A ^2 = \ell$. Choosing $\mb$ so that 
$$A = \sqrt{\mb} \mbox{ and } B=\sqrt{\ell + \mb},$$
that is, taking $m$ to be the integer part of $N A ^2 $ in~\eqref{eq:trial mexican 1} we deduce that also in this case 
\[
\E_N [\Psi_F] \to E_V(\ell) \mbox{ when } N \to \infty, 
\]
which is~\eqref{eq:trial mexican 2}.

If $D$ is large enough, the trial states we have just built indeed all belong to $\VD$, so we deduce from~\eqref{eq:trial increas}-\eqref{eq:trial mexican 2} that 
$$ \ED \leq \E_N [\Psi_F] \to E_V(\ell)$$
which combines with~\eqref{eq:main incomp} to complete the proof of~\eqref{eq:incomp opt}.

\hfill \qed

\section{Extensions of the main results}\label{sec:extensions}

Energy lower bounds of the type~\eqref{eq:main incomp} are a manifestation of the incompressibility of the states involved and ideally one would like to derive them for all fully correlated states~\eqref{eq:Ker}, not only the special states considered in Theorem~\ref{thm:incomp main}. This is a quite ambitious goal and genuinely new ideas will be needed to achieve it completely. However, substantial generalizations of Theorem~\ref{thm:incomp main} can be achieved by our methods as we now discuss.

The main generalization one could handle with our methods corresponds to allow ``$n$-body correlation factors'' of the form 
\begin{equation}\label{eq:choice correlations n}
F (z_1,\ldots,z_N) = \prod_{j = 1} ^N f_1 (z_j)\prod_{( i,j ) \in \{1,\ldots,N \} } f_2 (z_i,z_j)\ldots \prod_{( i_1,\ldots,i_n ) \in \{1,\ldots,N \} } f_n (z_{i_1},\ldots,z_{i_n})
\end{equation}
for some finite fixed $n$ and symmetric holomorphic functions $f_1,\ldots,f_n \in \Barg ^1, \ldots, \Barg ^n$ that can depend on $N$, with no a priori bounds on their degree. Let us sketch briefly these possible improvements:

\medskip

\noindent \textbf{Removing a priori bounds on the degree.} In~\eqref{eq:var set 2 D} we have restricted our attention to polynomials $f_1$  and $f_2$ satisfying a priori bounds on their degrees. First note that the estimates of Theorem~\ref{theo:incomp two} are actually explicit as a function of this degree so the theorem is still valid if the assumption is relaxed, with worse remainder terms however.

In fact we claim that if either bound is violated for a sequence 
$$F (z_1,\ldots,z_N) = \prod_{j = 1} ^N f_1 (z_j)\prod_{( i,j ) \in \{1,\ldots,N \} } f_2 (z_i,z_j)$$
then the rescaled density defined by~\eqref{eq:defi Gibbs} satisfies
\begin{equation}\label{eq:vanishing}
\mu_F ^{(1)} \wto 0 
\end{equation}
weakly as measures, which clearly implies 
$$ \E_N [\Psi_F] \to +\infty \mbox{ when } N \to \infty$$
because of~\eqref{eq:increase V} and justifies the a priori reduction we made. 

More precisely in this case one has
\begin{equation}\label{eq:vanishing 2}
 \mu_F ^{(1)} (B (0,R)) \to 0 \mbox{ when } N \to \infty 
\end{equation}
for any fixed radius $R$. The computations leading to~\eqref{eq:vanishing 2} are a bit tedious, especially for general polynomials $f_1$ and $f_2$, but here is the main idea: If either $g_1$ or $g_2$ has a large degree in the classical Hamiltonian~\eqref{eq:class hamil}, then it is clear that the two-body potential 
\begin{multline}\label{eq:potential generalize}
 \bar{W} (z,z') := \frac{|z| ^2}{2} + \frac{|z'| ^2}{2} - \frac{2}{N-1} \log |g_1 (z)| - \frac{2}{N-1} \log |g_1 (z)| 
 \\ - 2 (\ell + m ) \log |z-z'| - 2 \log |g_2 (z,z')| 
\end{multline}
takes its minimum far from the origin, in fact infinitely far in the limit $N\to \infty$. Simple model cases are e.g. $f_1 \equiv 1$, $f_2 \equiv 1$ and $m\to \infty$ when $N\to \infty$, or\footnote{This latter case has been considered in~\cite{RSY2}.} $f_1 (z) = z ^\gamma$, $f_2 \equiv 1$ and $m=0$ with $\gamma \gg N$.
Since the free-energy of $\mu_F$ in the plasma analogy is 
\[
 N \iint_{\R ^2 \times \R ^2} \bar{W} (z,z') \muF ^{(2)} (z,z') dzdz' + \frac{1}{N-1} \intRN \muF \log \muF
\]
it is clear that it will be favorable for $\muF$ to have its mass concentrated far from the origin. One can easily construct factorized trial states having this behavior and compare their free-energy to that of the minimizer $\muF$, taking into account that $\bar{W}$ is much larger than its infimum in any ball $B(0,R)$ with $R$ fixed. Rather simple arguments then allow to deduce~\eqref{eq:vanishing 2} in good cases, but the computations allowing to control~\eqref{eq:potential generalize} for generic polynomials $f_1,f_2$ are rather tedious. Arguments of this sort could also be adapted to rule out more general holomorphic functions than polynomials.


\medskip

\noindent\textbf{Higher order correlation factors.} Recall the definition~\eqref{eq:BargN} of the $N$-body bosonic Bargmann space and define the set  
\begin{multline}\label{eq:var set n}
\V _n = \Big\{ F \in \Barg ^N, \mbox{ there exist } (f_1,\ldots,f_n) \in \Barg \times \ldots \times \Barg ^n,
\\  F (z_1,\ldots,z_N) = \prod_{j = 1} ^N f_1 (z_j)\prod_{( i,j ) \in \{1,\ldots,N \} } f_2 (z_i,z_j)\ldots \prod_{( i_1,\ldots,i_n ) \in \{1,\ldots,N \} } f_n (z_{i_1},\ldots,z_{i_n}) \Big\}
\end{multline}
where $( i_1,\ldots,i_n )$ is understood as an $n$-tuple with no repetition of any of the indices $i_k,k\in \{1,\ldots,n \}$. We have 
\begin{itemize}
\item $\V_n \subset \V_{n+1}$ : given $F_n\in \V_n$ associated to $(f_1,\ldots,f_n)\in \Barg \times \ldots \times \Barg ^n$ we can see it as an element of $\V_{n+1}$ associated with $(f_1,\ldots,f_n,1) \in \Barg \times \ldots \times \Barg ^n \times \Barg^{n+1}$.
\item $\V_N = \BargN$ : for $F_N \in \BargN$ one may simply choose $(1,1,\ldots,F_N)\in \Barg \times \ldots \times \Barg ^N$.
\end{itemize}
We can consider the minimization of~\eqref{eq:start energy} amongst states with the correlation factors~\eqref{eq:var set n} instead of the simpler~\eqref{eq:var set 2 D}. This way we obtain the family of energies
\begin{equation}\label{eq:energy Nn extend}
 E_n (N):= \inf \left\{ \E_N [\Psi_F], \; \Psi_F \mbox{ of the form } (\ref{eq:PsiF}) \mbox{ with } F \in \V _n \right\}.
\end{equation}
and since $\V_n \subset \V_{n+1}$ we have of course  
\begin{equation}\label{eq:cascade n}
E (N) = E_N (N) \leq \ldots \leq E_{n+1} (N) \leq E_n (N) \leq \ldots \leq E_N (1). 
\end{equation}
One may expect that equality holds, at least asymptotically for large $N$, in a large number of these equalities. A first step would be to generalize Theorem~\ref{thm:incomp main} to a lower bound on $E_n (N)$ for $n$ as large as possible.

In case $F\in \V_n \setminus \V_{n-1}$ for some fixed finite $n \geq 3$ one should in fact also expect  
\begin{equation}\label{eq:vanishing bis}
\mu_F ^{(1)} \wto 0, 
\end{equation}
with the same consequences as before. Indeed, the plasma analogy also applies to functions built on $\V_n$, leading now to a classical energy including $n$-body terms, but now because of the symmetry there are at least ${N \choose n}$ interaction terms. Because of the scaling in~\eqref{eq:defi Gibbs}, these come multiplied by a prefactor $\propto N ^{-1}$, so that the total interaction strength will be of order roughly $N ^{-1}{N \choose n} \gg 1$ if $n>2$. With such a huge strength, it is intuitive that~\eqref{eq:vanishing bis} should occur. In fact, to see a non trivial behavior of $|\Psi_F| ^2$, one should rather rescale it on a much larger length-scale. To make this more precise one can inspect the resulting $n$-body potential $\bar{W}_n$ replacing~\eqref{eq:potential generalize} and see that its minimum would occur again very far from the origin. If $n\geq 3$, this will happen even if the $n$-body correlation factors have bounded degree. Since the classical free-energy of the classical plasma will now 
have the form 
\[
 N \int_{\R ^{2n}} \bar{W}_n (z_1,\ldots,z_n) \muF ^{(n)} (z_1,\ldots,z_n) dz_1,\ldots,dz_n + \frac{1}{N-1} \intRN \muF \log \muF
\]
one can argue as before that~\eqref{eq:vanishing bis} occurs. 

Of course these (formal) arguments break down if only the correlation factors $f_n$, $n\geq N-1$ are non trivial. In this case the total interaction strength is again of order $N ^{-1}{N \choose n}\lesssim 1$, so even if these ideas could be made rigorous, they would not allow to deal with the full set $\Ker$. One could still investigate for example the case of $n$ fixed when $N\to \infty$ and deduce that in this case
$$ \liminf_{N\to \infty} E_n (N) \geq E_V (\ell).$$
Combining with the arguments of Corollary~\ref{cor:radial}, one would then obtain 
$$ \lim_{N\to \infty} E_n (N) = \lim_{N\to \infty} E_m (N)$$ 
for any fixed $n,m\in \N$ in the case of radial increasing or radial mexican-hat potentials.

Concerning the analysis of the states~\eqref{eq:choice correlations n} we finally note that one could also extract some useful information from the mean-field approximation procedure of Section~\ref{sec:incompressibility}. It would be useful to scale space variables differently but this is a detail. One would need to work with the formulas~\eqref{eq:DiacFreed 2} for higher-order marginals in the Diaconis-Freedman theorem. In view of~\eqref{eq:DiacFreed}, it should be possible to obtain quantitative information as long as~$n\ll \sqrt{N}$.

\appendix

\section{The Diaconis-Freedman theorem}\label{sec:DiacFreed}

We make use of the Diaconis-Freedman theorem~\cite{DF}, which may be seen as a quantitative version of the Hewitt-Savage (or classical de Finetti) theorem~\cite{HS}, whose importance for classical mean-field problems has been recognized for some time now~\cite{CLMP,Kie1,KS,MS,Gol}. The Hewitt-Savage theorem is most often seen as an existence result in the literature, but it in fact follows from the constructive approach in~\cite{DF}. Since we make use of this fact, it is worth recalling the main result of~\cite{DF} and sketching the proof. We shall denote $\Vert \: . \: \Vert_{\rm TV}$ the total variation norm.

\begin{theorem}[\textbf{Diaconis-Freedman}]\label{thm:DiacFreed}\mbox{}\\
Let $S$ be a measurable space and $\mu \in \PP_s (S ^N)$ be a probability measure on $S^N$ invariant under permutation of its arguments. There exists $P_{\mu} \in \PP(\PP (S))$ a probability measure such that, denoting 
\begin{equation}\label{eq:Pnu}
\mut := \int_{\rho \in \PP (S)} \rho ^{\otimes N} dP_{\mu} (\rho)
\end{equation}
we have
\begin{equation}\label{eq:DiacFreed}
\left\Vert \mu ^{(n)} - \mut ^{(n)} \right\Vert _{\rm TV} \leq \frac{n(n-1)}{N}.
\end{equation}
In addition, the marginals of $\mut$ are given by :
\begin{equation}\label{eq:DiacFreed 2}
\mut  ^{(n)} (x_1,\ldots,x_n) =  \frac{1}{N ^n} \sum_{j=1} ^n \frac{N\,!(n-j)\,!}{(N-j)\,!\,n\,!}\sum_{\sigma \in \Sigma_n} \mu ^{(j)} (x_{\sigma(1)},\ldots,x_{\sigma(j)}) \: \delta_{x_{\sigma(j)} = x_{\sigma(j+1)} = \ldots = x_{\sigma(n)}}
\end{equation}
with $\Sigma_n$ the group of permutations of $n$ elements.
\end{theorem}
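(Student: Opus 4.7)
My proof plan is constructive. I would take $P_\mu$ to be the law of the empirical measure of a configuration sampled from $\mu$, i.e.
\[
P_\mu := (e_N)_\# \mu \in \PP(\PP(S)), \qquad e_N(x_1,\ldots,x_N) := \frac{1}{N}\sum_{i=1}^N \delta_{x_i}.
\]
This choice is essentially forced if one wants a quantitative de Finetti result: sampling $\rho$ from $P_\mu$ and then $N$ i.i.d.\ points from $\rho$ should approximately recover $\mu$, and the empirical measure is the canonical such object. The existence of $P_\mu$ as a Borel probability on $\PP(S)$ is then tautological.

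With this definition, $\mut = \int_{S^N} (e_N(x))^{\otimes N} d\mu(x)$. The first substantive task is to compute $\mut^{(n)}$. I would expand
\[
(e_N(x))^{\otimes n}(y_1,\ldots,y_n) = \frac{1}{N^n} \sum_{(i_1,\ldots,i_n) \in \{1,\ldots,N\}^n} \prod_{k=1}^n \delta_{y_k = x_{i_k}},
\]
integrate against $\mu$, and organize the sum over $(i_1,\ldots,i_n)$ by the set partition they induce on $\{1,\ldots,n\}$: positions $k, k'$ lie in the same block iff $i_k=i_{k'}$. For a given partition with $j$ blocks, exchangeability of $\mu$ ensures that all resulting contributions can be rewritten as integrals against $\mu^{(j)}$, and careful counting of the number of such $n$-tuples (namely $N(N-1)\cdots(N-j+1)$) together with repackaging the sum over partitions as a sum over $\sigma\in\Sigma_n$ yields formula~\eqref{eq:DiacFreed 2}. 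This combinatorial bookkeeping is the only genuinely delicate step.

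The decisive observation for the TV estimate is that only the partition with $j=n$ blocks (all $i_k$ distinct) contributes an off-diagonal term, and by exchangeability of $\mu$ this contribution equals $\frac{N!}{N^n(N-n)!}\,\mu^{(n)}$. All remaining contributions form a non-negative measure supported on the diagonal set $\{\exists k\neq k' : y_k = y_{k'}\}$, whose total mass is $1 - \frac{N!}{N^n(N-n)!}$ since $\mut^{(n)}$ is a probability measure. Splitting
\[
\mu^{(n)} - \mut^{(n)} = \left(1 - \frac{N!}{N^n(N-n)!}\right)\mu^{(n)} - \big(\text{diagonal contributions}\big)
\]
and applying the triangle inequality for the TV norm gives
\[
\left\Vert \mu^{(n)} - \mut^{(n)}\right\Vert_{\rm TV} \leq 2\left(1 - \frac{N!}{N^n(N-n)!}\right).
\]
The final step is the elementary inequality
\[
\frac{N!}{N^n(N-n)!} = \prod_{k=0}^{n-1}\left(1-\frac{k}{N}\right) \geq 1 - \sum_{k=0}^{n-1}\frac{k}{N} = 1 - \frac{n(n-1)}{2N},
\]
which yields~\eqref{eq:DiacFreed}. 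The main obstacle, such as it is, lies in the combinatorial identification of~\eqref{eq:DiacFreed 2}; once that formula is in hand, the TV estimate follows essentially for free by isolating the contribution of injective $n$-tuples.
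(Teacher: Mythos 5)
Your construction is exactly the one used in the paper: $P_\mu$ is the push-forward of $\mu$ under the empirical-measure map, so that $\mut = \int (e_N(x))^{\otimes N}\,d\mu(x)$, which is the paper's "sampling with replacement" measure, and your partition-by-coincidence computation of $\mut^{(n)}$ is the same combinatorics the paper carries out explicitly for $n=1,2$. The one place you genuinely diverge is the total variation bound: the paper interprets $\mu^{(n)}$ and $\mut^{(n)}$ as the laws of drawing $n$ balls without and with replacement and cites Freedman's sampling lemma for the estimate $n(n-1)/N$, whereas you prove it directly by isolating the injective-tuple contribution $\frac{N!}{N^n(N-n)!}\,\mu^{(n)}$, observing that the remainder is a nonnegative measure of mass $1-\frac{N!}{N^n(N-n)!}$ carried by the coincidence set, and finishing with the triangle inequality and $\prod_{k=0}^{n-1}(1-k/N)\geq 1-\frac{n(n-1)}{2N}$. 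That argument is correct (it is in effect the standard coupling proof of Freedman's lemma, specialized to the marginals) and has the advantage of making the appendix self-contained; what it costs is nothing beyond the exchangeability of $\mu^{(n)}$, which you correctly invoke to identify the $j=n$ block with $\mu^{(n)}$ itself.
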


\begin{proof}
The proof may be found in~\cite{DF}, 
we provide a sketch for the convenience of the reader. We take $S = \R ^2$, which is the case of interest for us here and abuse notation by writing $\mu(Z) dZ$ instead of $d\mu(Z)$ for integrals in $Z\in \R ^{2N}$. Note that the symmetry of $\mu$ implies 
\begin{equation}\label{eq:repre P}
\mu (X) = \intRN \mu (Z) \sum_{\sigma \in \Sigma_N} (N!) ^{-1} \delta_{Z_{\sigma} = X } dZ.
\end{equation}
The main idea is to define 
\begin{equation}\label{eq:defi Pnu}
\mut (X) = \intRN  \mu(Z) \sum_{\gamma \in \Gamma_N} N ^{-N} \delta_{Z_{\gamma} = X} dZ,
\end{equation}
with $\Gamma_N$ the set of all maps (not necessarily one-to-one) from $\left\{1,\ldots,N \right\}$ to itself. Noticing that 
\begin{equation}\label{eq:factor urn}
\sum_{\gamma \in \Gamma_N} N ^{-N} \delta_{Z_{\gamma} = X}  = \left( N ^{-1} \sum_{j=1} ^N \delta_{z_j=x} \right) ^{\otimes N}, 
\end{equation}
one may put~\eqref{eq:defi Pnu} in the form~\eqref{eq:Pnu} by taking 
\begin{equation}\label{e:defi nu}
P_{\mu} (\rho) = \intRN \delta_{\rho = \bar{\rho}_Z} \mu(Z) dZ, \quad  \bar{\rho} _Z (x) := \sum_{i=1} ^N N ^{-1} \delta_{z_j=x}.
\end{equation}
Computing the difference between $\mu ^{(n)}$ and $\mut ^{(n)}$ we have of course 
\[
\mu ^{(n)} - \mut ^{(n)} =  \intRN  \left( \left(\sum_{\sigma \in \Sigma_N} (N!) ^{-1} \delta_{Z_{\sigma} = X}\right) ^{(n)} -  \left(\sum_{\gamma \in \Gamma_N} N ^{-N} \delta_{Z_{\gamma} = X}\right) ^{(n)}\right) \mu(Z) dZ
\]
but $\sum_{\sigma \in \Sigma_N} (N!) ^{-1} \delta_{Z_{\sigma} = X}$ is the probability law of drawing $N$ balls at random from an urn \footnote{Where the balls are labeled $x_1,\ldots,x_N$.}, \textit{without} replacement whereas $\sum_{\gamma \in \Gamma_N} N ^{-N} \delta_{Z_{\gamma} = X}$ is the probability law of drawing $N$ balls at random from an urn, \textit{with} replacement. It is thus intuitively clear that the difference 
\[
\left(\sum_{\sigma \in \Sigma_N} (N!) ^{-1} \delta_{Z_{\sigma} = X}\right) ^{(n)} -  \left(\sum_{\gamma \in \Gamma_N} N ^{-N} \delta_{Z_{\gamma} = X}\right) ^{(n)} 
\]
between their reduced densities is small when $n\ll N$. The meaning of `small' in this sentence is not difficult to quantify as a function of $n$ and $N$, see~\cite{Fre} where the total variation bound $\frac{n(n-1)}{N}$ is obtained, which leads to~\eqref{eq:DiacFreed}.

The only fact that is not explicitly mentioned in~\cite{DF} is~\eqref{eq:DiacFreed 2}, but this is an easy consequence of~\eqref{eq:factor urn}. 
Using the symmetry of $P$ we have
\begin{align}\label{eq:marginals DF}
\mut ^{(1)} (x) &= N ^{-1} \sum_{j=1} ^N \intRN \mu (Z) \delta_{z_j=x} dZ = \mu ^{(1)} (x)\nonumber\\
\mut ^{(2)} (x_1,x_2) &= N ^{-2} \intRN \mu (Z) \left( \sum_{j=1} ^N \delta_{z_j = x_1} \right) \left( \sum_{j=1} ^N \delta_{z_j = x_2} \right) dZ \nonumber \\
&= N ^{-2} \sum_{ 1 \leq i \neq j \leq N} \intRN \mu (Z) \delta_{ z_i = x_1} \delta_{z_j = x_2}  dZ + N ^{-2} \sum_{i=1} ^N \intRN \mu (Z) \delta_{z_i = x_1} \delta_{z_i = x_2} dZ \nonumber \\
&= \frac{N-1}{N} \mu ^{(2)} (x_1,x_2) + \frac1N \mu ^{(1)} (x_1) \delta_{x_1 = x_2}.
\end{align}
%
%
The computation of the higher order marginals follows along the same lines and leads to~\eqref{eq:DiacFreed 2}. An estimate of the form~\eqref{eq:DiacFreed} can also be seen as following from this computation as noted by Lions~\cite{Lio}.

\end{proof}


\begin{thebibliography}{aaaa99}


\bibitem[ABD]{ABD}  \textsc{A. Aftalion, X. Blanc, J. Dalibard}, Vortex Patterns in a Fast Rotating Bose-Einstein Condensate, \textit{ Phys. Rev. A} \textbf{71}, 023611 (2005).

\bibitem[ABN1]{ABN1} \textsc{A. Aftalion, X. Blanc, F. Nier}, Vortex Distribution in the Lowest Landau Level, \textit{Phys. Rev. A} \textbf{73}, 011601(R) (2006).

\bibitem[ABN2]{ABN2} \textsc{A. Aftalion, X. Blanc, F. Nier}, Lowest Landau Level Functionals and Bargmann Spaces for Bose-Einstein Condensates, \textit{J. Funct. Anal.} \textbf{241}, 661--702 (2006).


\bibitem[BF]{BF} \textsc{S. Bieri, J. Fr\"ohlich}, Physical principles underlying the quantum Hall effect, \textit{Comptes Rendus Physique} \textbf{12}, 332-346 (2011).

\bibitem[BCR]{BCR} \textsc{A. Boyarsky, V.V. Cheianov, O. Ruchayskiy}, Microscopic construction of the chiral Luttinger liquid theory of the quantum Hall edge, \textit{Phys. Rev. B} \textbf{70}, 235309 (2004).

\bibitem[CLMP]{CLMP} \textsc{E. Caglioti, P. L. Lions, C. Marchioro, M. Pulvirenti},  A Special Class of Stationary Flows
for Two-Dimensional Euler Equations: A Statistical Mechanics Description, \textit{Comm. Math. Phys.} \textbf{143}, 501--525 (1992).

\bibitem[Car]{Car} \textsc{E. Carlen}, Some integral identities and inequalities for entire functions and their application to the
coherent state transform, \textit{J. Funct. Anal.} \textbf{97}, 231?249 (1991).

\bibitem[Cif]{Cif} \textsc{O. Ciftja}, Monte Carlo study of Bose Laughlin wave function for filling factors $1/2$, $1/4$ and $1/6$, \textit{Europhys. Lett.} \textbf{74}, 486--492 (2006).

\bibitem[DF]{DF} \textsc{P. Diaconis, D. Freedman}, Finite exchangeable sequences, \textit{Annals of Probability} \textbf{8}, 754--764 (1980).

\bibitem[Fre]{Fre} \textsc{D. Freedman}, A remark on the difference between sampling with and without replacement, \textit{Journal of the American Statistical Association} \textbf{73}, 681 (1977).

\bibitem[Gir]{Gir} \textsc{S. Girvin}, Introduction to the fractional quantum Hall effect, \textit{S\'eminaire Poincar\'e} \textbf{2}, 54--74 (2004).

\bibitem[Goe]{Goe} \textsc{M. O. Goerbig}, Quantum Hall effects, \textit{arXiv:0909.1998} (2009).

\bibitem[Gol]{Gol} \textsc{F. Golse}, On the Dynamics of Large Particle Systems in the Mean Field Limit, \textit{arXiv:1301.5494} (2013).

\bibitem[HS]{HS} \textsc{E. Hewitt, L.J. Savage}, Symmetric measures on Cartesian products, \textit{Trans. Amer. Math. Soc.} \textbf{80}, 470?501 (1955).

\bibitem[Jai]{Jai} \textsc{J.K. Jain}, The role of analogy in unraveling the fractional quantum Hall effect mystery, \textit{Physica E} \textbf{20}, 79--88 (2003). 

\bibitem[Kie]{Kie1} \textsc{M. Kiessling}, Statistical mechanics of classical particles with logarithmic interactions, \textit{Comm. Pure. Appl. Math.} \textbf{46}, 27--56 (1993).


\bibitem[KS]{KS} \textsc{M. Kiessling, H. Spohn}, A Note on the Eigenvalue Density of Random Matrices, \textit{Communications in Mathematical Physics} \textbf{199}, 683--695 (1999).

\bibitem[Lau]{Lau} \textsc{R. B. Laughlin}, Anomalous quantum Hall effect: An incompressible quantum fluid with fractionally charged excitations, \textit{Phys. Rev. Lett.} \textbf{50}, 1395--1398 (1983).

\bibitem[Lau2]{Lau2} \textsc{R. B. Laughlin}, Elementary theory : the incompressible quantum fluid, in \textit{The quantum Hall effect} ed. by R.E. Prange and S.M. Girvin, Springer, Heidelberg 1987.

\bibitem[LNW]{LNW} \textsc{P.A. Lee, N. Nagaosa, X.G. Wen}, Doping a Mott Insulator: Physics of High Temperature Superconductivity, \textit{Rev. Mod. Phys.} \textbf{78}, 17 (2006).

\bibitem[LFS]{LFS} \textsc{I. P. Levkivskyi, J. Fr\"ohlich, E. V. Sukhorukov}, Theory of fractional quantum Hall interferometers, \textit{Phys. Rev. B} \textbf{86}, 245105  (2012).

\bibitem[LS]{LS} \textsc{M. Lewin, R. Seiringer}, Strongly Correlated Phases in Rapidly Rotating Bose Gases, \emph{J. Stat. Phys.} \textbf{137}, 1040--1062 (2009).

\bibitem[LL]{LL} \textsc{E.H. Lieb, M. Loss}, \emph{Analysis}, Graduate Studies in Mathematics {\bf 14}, AMS, Providence, 1997.

\bibitem[LSY]{LSY} \textsc{E.H. Lieb, R. Seiringer, J. Yngvason}, The yrast Line of a Rapidly Rotating Bose Gas: The Gross-Pitaevskii Regime, \textit{Phys. Rev. A} \textbf{79}, 063626 (2009).

\bibitem[Lio]{Lio} \textsc{P-L. Lions}, Mean-Field games and applications, Lectures at the {Coll\`ege de France}, 2007.
 
\bibitem[MS]{MS} \textsc{J. Messer, H. Spohn}, Statistical mechanics of the isothermal Lane-Emden equation, \textit{J. Stat. Phys.} \textbf{29}, 561--578 (1982). 

\bibitem[MF]{MF} \textsc{A.G. Morris, D.L. Feder }, Gaussian Potentials Facilitate Access to Quantum Hall States in Rotating Bose Gases, \textit{Phys. Rev. Lett.} \textbf{99}, 240401 (2007). 


\bibitem[PB]{PB} \textsc{T. Papenbrock, G.F. Bertsch}, Rotational spectra of weakly interacting Bose-Einstein condensates, \textit{Phys. Rev. A} \textbf{63}, 023616 (2001). 

\bibitem[RRD]{RRD} \textsc{M. Roncaglia, M. Rizzi, J. Dalibard}, From Rotating Atomic Rings to Quantum Hall States, www.nature.com, \textit{Scientific Reports} \textbf{1}, doi:10.1038/srep00043 (2011).


\bibitem[RS]{RS}\textsc{N. Rougerie, S. Serfaty}, Higher Dimensional Coulomb Gases and Renormalized Energy Functionals,  \textit{arXiv:1307.2805} (2013).

\bibitem[RSY1]{RSY1} \textsc{N. Rougerie, S. Serfaty, J. Yngvason}, Quantum Hall states of bosons in rotating anharmonic traps, \textit{Phys. Rev. A} \textbf{87}, 023618 (2013).

\bibitem[RSY2]{RSY2}\textsc{N. Rougerie, S. Serfaty, J. Yngvason}, Quantum Hall phases and plasma analogy in rotating trapped Bose gases, \textit{J. Stat. Phys.}, 10.1007/s10955-013-0766-0 (2013).

\bibitem[ST]{ST} \textsc{E.B. Saff, V. Totik}, \emph{Logarithmic Potentials with External Fields}, Grundlehren der mathematischen Wissenchaften \textbf{316}, Springer-Verlag, Berlin, 1997.

\bibitem[SS]{SS} \textsc{E. Sandier, S. Serfaty}, 2D Coulomb gases and the renormalized energy, arxiv 1201:3503 (2012).

\bibitem[STG]{STG} \textsc{H.L. Stormer, D.C. Tsui, A.C. Gossard}, The fractional quantum Hall effect, \textit{Rev. Mod. Phys.} \textbf{71}, S298--S305 (1999).

\bibitem[TK]{TK} \textsc{S.A. Trugman, S. Kivelson}, Exact results for the fractional quantum Hall effect with general interactions, \textit{Phys. Rev. B} \textbf{31}, 5280 (1985).

\bibitem[Vie]{Vie} \textsc{S. Viefers}, Quantum Hall physics in rotating Bose-Einstein condensates, \textit{J. Phys. C} \textbf{12}, 123202~(2008).


\end{thebibliography}
\end{document}